\theoremstyle{plain}
\newtheorem{theorem}{Theorem}
\newtheorem{proposition}[theorem]{Proposition}
\theoremstyle{remark}
\theoremstyle{definition}
\begin{document}

\title{Dirac points annihilation and its obstruction characterized by Euler number and quaternionic charges in kagome lattice}

\author*[1,2]{\fnm{M.} \sur{Finck}}\email{matthieu.finck@uca.fr}
\author[1,3]{\fnm{D.} \sur{Solnyshkov}}
\author[2]{\fnm{J.} \sur{Dubois}}
\author[1]{\fnm{G.} \sur{Malpuech}}

\affil[1]{\orgdiv{Institut Pascal}, \orgname{Université Clermont Auvergne, CNRS}, \orgaddress{\city{Clermont-Ferrand}, \postcode{F-63000}, \country{France}}}

\affil[2]{\orgdiv{Laboratoire de Mathématiques Blaise Pascal}, \orgname{Université Clermont Auvergne, CNRS}, \orgaddress{\city{Clermont-Ferrand}, \postcode{F-63000}, \country{France}}}

\affil[3]{\orgname{Institut Universitaire de France (IUF)}, \orgaddress{\city{Paris}, \postcode{F-75231}, \country{France}}}

\maketitle 

\begin{abstract}
~We investigate the topological phenomenon of Dirac point annihilation and its obstruction in three-band, real symmetric Hamiltonians with time-reversal symmetry, and their relation to the Euler number, a well-known topological invariant.
For this purpose, we study the example of the kagome lattice using a simple tight-binding model. By tuning the parameters of the lattice continuously, we illustrate situations where two Dirac points are able to annihilate, and others, where this annihilation is topologically obstructed. For a system with no gaps between the three bands, like in the kagome lattice, the Euler number of two bands is ill-defined on the whole Brillouin zone, which requires the introduction of the so-called ``patch" Euler number on a subregion without additional degeneracies coming from the third band. A non-zero patch Euler number means that the annihilation of the Dirac points is impossible.
We also illustrate another point of view, using homotopy theory, associating the Dirac points with quaternionic charges. We prove that the non-abelian braiding of the Dirac points in k-space conjugates their quaternionic charge and explains the possible obstruction to the annihilation of Dirac points.
Finally, we show that the proposed deformation of the kagome lattice can be achieved in realistic photonic systems.
\end{abstract}

\keywords{Dirac points, Kagome lattice, Euler number, Topological invariant, Quaternionic charge}

\section{Introduction}\label{sec1}

The geometric phase in quantum mechanics, or Berry phase, was introduced by Michael Berry in 1984~\cite{berry_quantal_1984}. This concept appears in the Adiabatic Theorem~\cite{sakurai_modern_2011} when studying quantum systems responding to a slow, cyclic change in their parameters~\cite{shapere_geometric_1989}. While phases in quantum mechanics were well-known since its beginnings, the geometric phase revealed an overlooked contribution, which depends only on the geometry of the parameter space, independent of time and energy. 

More precisely, the geometric phase factor happens to be the holonomy of a connection on a fiber bundle~\cite{nakahara_geometry_2003}, the Berry connection, where the bundle's base space is the manifold of parameters and the fiber represents the quantum state space. In other words, while slowly moving inside the parameter space, the state of the system evolves according to the parallel transport defined by this connection. This formulation revealed a link between quantum mechanics and gauge theories, in which the Berry connection behaves mathematically like a gauge potential, as well as a link with differential geometry, with the language of fiber bundles.

A widely used concept in topology is the one of topological invariant: roughly speaking, it is a mathematical object (number, polynomial, group,...) which remains unchanged under continuous deformations without tearing or introducing singularities. For fiber bundles, it is well-known that characteristic classes and characteristic numbers are efficient invariants to classify fiber bundles~\cite{milnor_characteristic_nodate}. They detect to which extent a fiber bundle is non-trivial, meaning that it is not possible to find a global choice of phase for the state vectors. A well-known characteristic number, the first Chern number, can be computed by integrating the Berry curvature over the Brillouin zone.

This framework has proven effective in understanding topological phenomena such as the quantum Hall effect~\cite{klitzing_new_1980}, magnetic monopoles, and new phases of matter such as topological insulators~\cite{kane_quantum_2005} which exhibit interesting topological effects, such as chiral edge states that are robust to small perturbations of the material.

Another topological phenomenon which can be studied with fiber bundles, and the topic of this work, is the annihilation of Dirac points. Dirac points are degeneracies between two bands where the dispersion becomes linear, which produces massless quasiparticles described by the Dirac equation~\cite{castro_neto_electronic_2009}. Dirac points usually come in pairs, and when they are brought together, they might annihilate, opening a gap between the two bands. This annihilation signals a topological transition between a
semimetallic phase and a band insulator~\cite{mergingOfDirac}. However, it is also possible that
the two points collide and then separate again without disappearing, in which case we say that the annihilation is obstructed. The study of the annihilation of Dirac points was first carried out in two-band models in~\cite{mergingOfDirac}. Our work on three-band Dirac point annihilation builds upon recent developments by~\cite{wu_non-abelian_2019},~\cite{bouhon_non-abelian_2020} and~\cite{ahn_failure_2019}. 

In this work, we consider the kagome lattice (whose name originates from the traditional woven bamboo pattern~\cite{tilings}), which we describe in Section~\ref{sec:kagome}. The advantage of this model is that the annihilation and its obstruction can be studied by varying experimentally accessible parameters, contrary to the systems studied previously. We describe this lattice with a simple tight-binding nearest-neighbor $3\times 3$ Bloch Hamiltonian, which depends on the 2D wave vector $k$. In Section~\ref{annihilation}, we consider a specific deformation of the kagome lattice where two Dirac points collide and either annihilate or not depending on the path they take, and we start analyzing the phenomenon by introducing a vector bundle~\cite{milnor_characteristic_nodate} of eigenstates, called the eigenspace bundle. 

A key contribution of our work is proving rigorously the effectiveness of the vector bundle approach for analyzing Dirac points annihilation, using cohomology and characteristic classes theory. We argue that these principles are not only applicable to this specific model, but to a broad class of topological systems where the structure of eigenspaces plays an important role. In Section~\ref{Eulernumber}, we analyze its properties using the Euler number. In Section~\ref{sec:quaternions}, we illustrate the alternative point of view of quaternionic charge and non-abelian reciprocal braiding. In Section~\ref{sec6}, we show how the annihilation and its obstruction can be studied in realistic photonic systems.

\section{The model: the kagome lattice}\label{sec:kagome}

In this section, we describe the kagome lattice, chosen as a realistic model where the parameters responsible for the annihilation and its obstruction can be easily tunable experimentally. 
It is a two-dimensional lattice consisting of equilateral triangles and regular hexagons as shown in Fig.~\ref{fig:kagome}. Just like the honeycomb lattice for graphene, the kagome lattice is not a Bravais lattice~\cite{castro_neto_electronic_2009}, but it can still be described as a lattice with multiple (in this case, three) inequivalent atoms $A$, $B$, $C$ per primitive cell.
We choose a rhombus as a primitive cell.

\begin{figure}[h!]
    \centering
    \includegraphics[width=0.9\linewidth]{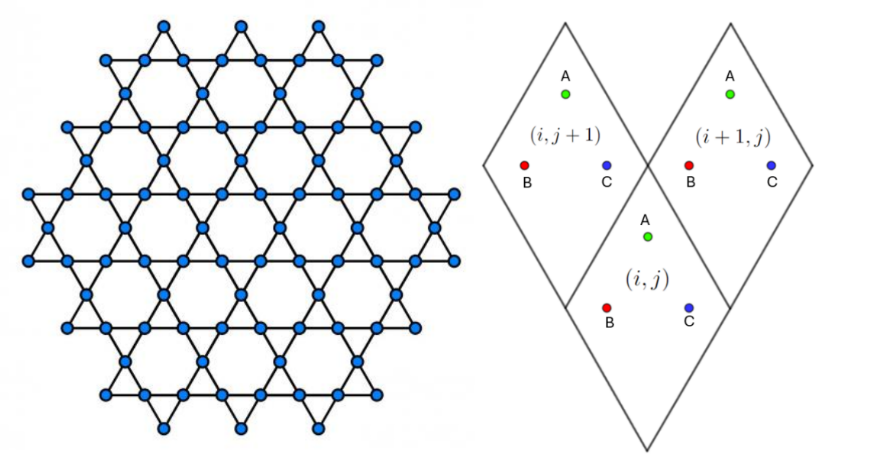}
    \caption{The kagome lattice, and three primitive cells (also called fundamental domains) labeled with two integers corresponding to the two dimensions of the lattice. Each primitive cell contains three sites labeled $A$, $B$ and $C$.}
    \label{fig:kagome}
\end{figure}

In order to model the behavior of the system, we use a simple tight-binding model with nearest-neighbor hopping written in the so-called base II representation~\cite{bena_remarks_2009}, whose advantage for kagome lattice is that the resulting Hamiltonian is purely real~\cite{lim_dirac_2020}.
The details of the construction for the model can be found in Appendix~\ref{annexeKagome}.

We get a $3\times 3$ matrix, the Bloch Hamiltonian~\cite{lim_dirac_2020}:
$$H(k)=-2t\begin{pmatrix} 
0 & \cos(k \cdot \delta_1) & \cos(k \cdot \delta_2) \\ 
\cos(k \cdot \delta_1) & 0 & \cos(k \cdot \delta_3) \\ 
\cos(k \cdot \delta_2) & \cos(k \cdot \delta_3) & 0
\end{pmatrix} $$
where $k$ is a 2D wave vector, $t>0$ is the hopping amplitude and $\delta_1,\delta_2,\delta_3$ are the nearest-neighbor hopping vectors.

$H(k)$ is a real symmetric matrix, hence it is orthogonally diagonalizable with real eigenvalues. It possesses the time-reversal symmetry $H(k)=H(-k)$ (Appendix~\ref{annexeTimereversal}). We order the three bands by increasing energy, $E_1(k) \leq E_2(k) \leq E_3(k)$ for all $k$.
We will denote the corresponding eigenvectors as $u_1(k), u_2(k),u_3(k)$. However, they are not uniquely defined: if $u_i(k)$ is a real normalized eigenvector, then so is $-u_i(k)$. This choice of a $\pm 1$ gauge will play an important role in the geometry of the system.

By plotting the dispersion, we observe three bands exhibiting degeneracies, like in Fig.~\ref{fig:annihilation}.
The degeneracies we are most interested in are the Dirac points: they are defined as band crossing points where the dispersion is linear, which results in cones. Those points have been studied extensively, especially in the context of graphene, because electrons near the Dirac points behave as massless Dirac fermions~\cite{castro_neto_electronic_2009}.

\section{Dirac point annihilation}\label{annihilation}

By allowing the tuning of the on-site energies $E_A,E_B,E_C$ and the tunneling coefficients $t_{AB},t_{AC},t_{BC}$ for different pairs of atoms, the Bloch Hamiltonian $H(k)$ takes the following form:  \begin{equation}
H(k)=-\begin{pmatrix} 
E_A & t_{AB}\cos(k \cdot \delta_1) & t_{AC}\cos(k \cdot \delta_2) \\ 
t_{AB}\cos(k \cdot \delta_1) & E_B & t_{BC}\cos(k \cdot \delta_3) \\ 
t_{AC}\cos(k \cdot \delta_2) & t_{BC}\cos(k \cdot \delta_3) & E_C
\end{pmatrix}
\label{Hamiltonien}
\end{equation} Note that the matrix remains real symmetric and conserves time-reversal symmetry $H(k)=H(-k)$. In solid state systems, the parameters can be tuned to a certain extent by applying different types of strain~\cite{Lima2023} or by considering different materials from the same family of kagome lattices~\cite{Jung2022}. Independent tuning of all parameters is possible in photonic lattices~\cite{li2020higher}.

When tuning some of these parameters, it is possible to bring together the Dirac points, and observe their collision, as in Fig.~\ref{fig:annihilation}. If a gap is opened, and the points disappear, it is called an annihilation. On the contrary, if no gap is opened and the points remain present, we say that the annihilation was obstructed.

\begin{figure}[h!]
    \centering
    \includegraphics[width=1\linewidth]{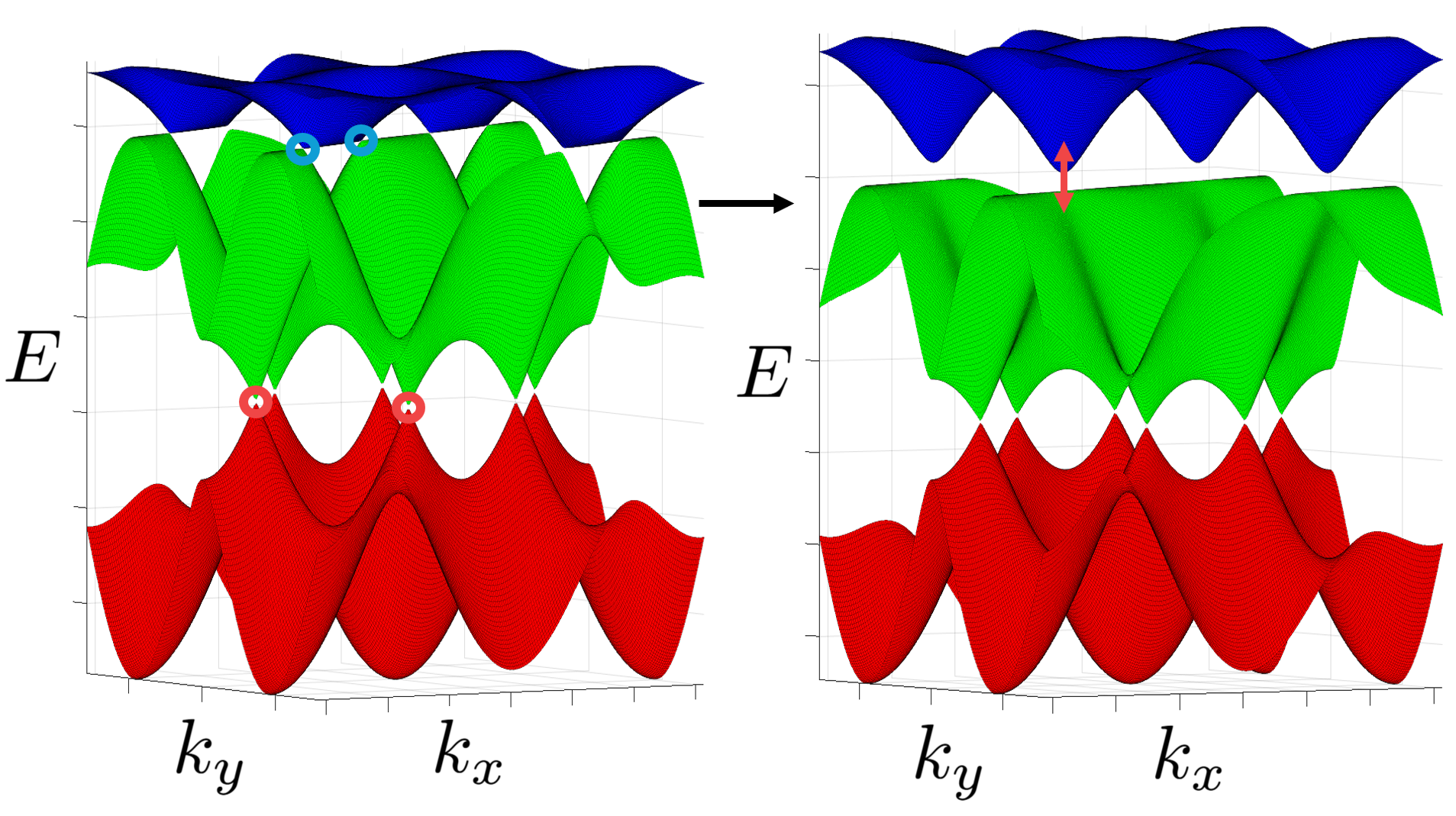}
    \caption{The unobstructed annihilation of principal Dirac points (two of them shown in blue), creating a gap between bands 2 and 3. Two adjacent Dirac points between the other bands are shown in red.}
    \label{fig:annihilation}
\end{figure}

In three-band models, it is essential to distinguish the Dirac points coming from the two different pairs of bands. The Dirac points coming from the two bands of interest will be called ``principal Dirac points" whereas the others will be called ``adjacent Dirac points". Therefore, we will study only the annihilation of principal Dirac points, although it can also happen for adjacent Dirac points. Without loss of generality, in this article, we will only be interested in the case where the principal Dirac points are taken to be the ones between the higher energy bands, 2 and 3; and the adjacent Dirac points between the two lower bands, 1 and 2.

The following 6-panel Fig.~\ref{fig:6panel} is a concrete example which illustrates that two principal Dirac points cannot always annihilate each other, and that this obstruction may or may not occur depending on the presence of adjacent Dirac points. We use a color plot graph of the difference of energies between bands 2 and 3 to visualize the phenomenon more clearly. We start in panel $(a)$ with two principal Dirac points (in dark blue). We first modify one of the tunneling coefficients, $t_{BC}$. The Dirac points merge in $(b)$, and fail to annihilate, as they bounce in the orthogonal direction in $(c)$.  In $(d)$ and $(e)$, we modify one of the on-site energies, $E_A$, which makes adjacent Dirac points appear within the Brillouin zone (BZ). The adjacent Dirac points are not visible in Fig.~\ref{fig:6panel}, but we show and discuss them in detail below.
When the principal Dirac points collide again in $(f)$, they annihilate, and a gap is opened.

\begin{figure}[h!]
    \centering
    \includegraphics[width=1\linewidth]{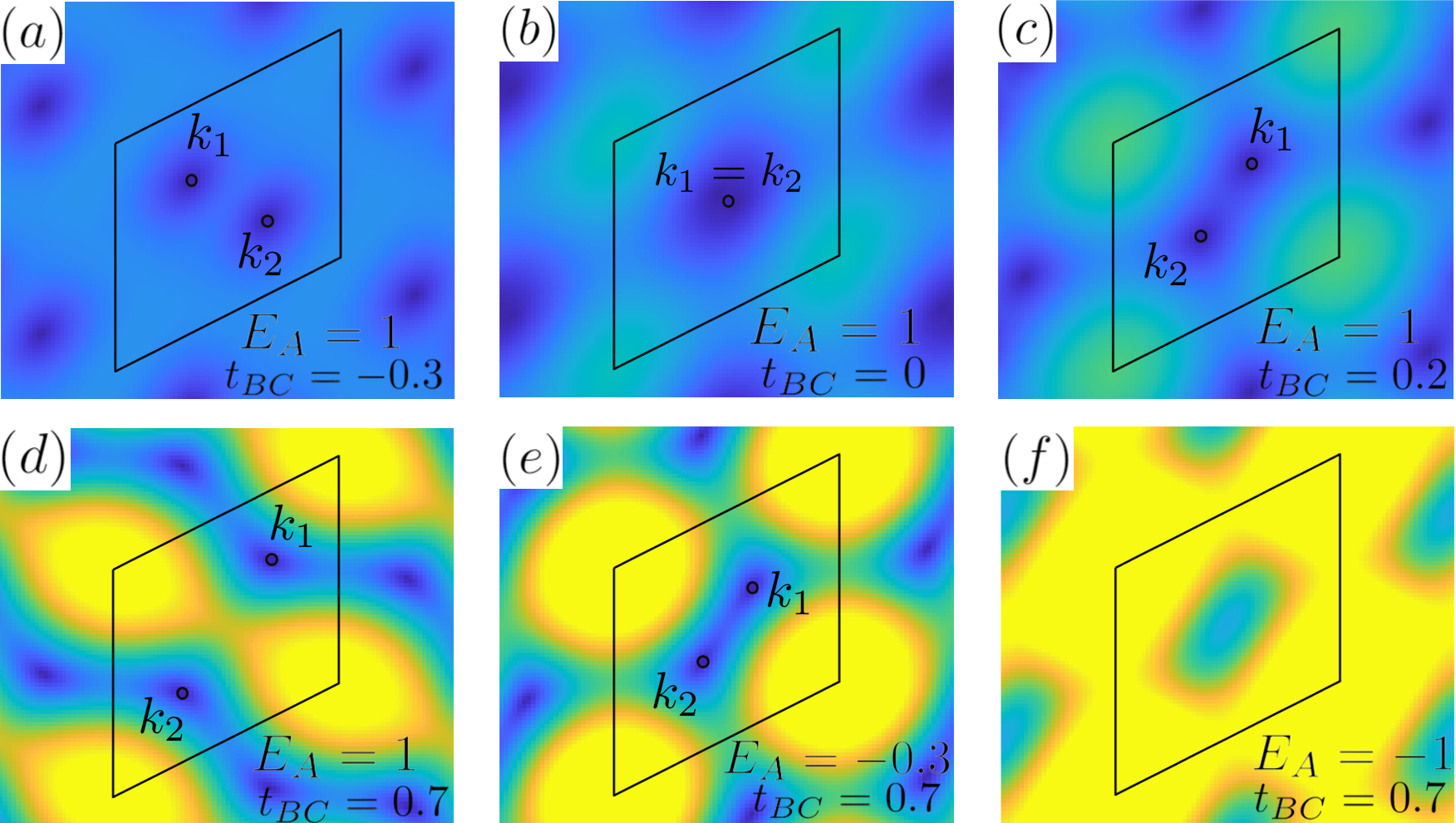}
    \caption{Energy difference between bands 2 and 3 at six different values of the tuning parameters. The rhombus is a choice of a reciprocal primitive cell. The Dirac points $k_1$ and $k_2$ are the points where the energy difference is zero (in dark blue). In panel (a), the initial configuration for the tuning parameters is 
 $E_A=1$, $t_{AB}=1$, $t_{AC}=-1$, $E_{B}=0$,  $t_{BC}=-0.3$, $E_C=0$. The parameters are modified as shown in the bottom right corner of each panel: $t_{BC} =0$,  $t_{BC}=0.2$,  $t_{BC}=0.7$, $E_A=-0.3$, $E_A=-1$.}
    \label{fig:6panel}
\end{figure}

In order to study the topology and geometry of this system, the relevant notion is a rank-2 eigenspace bundle~\cite{sergeevVectorBundle2023}. Vector bundles are reviewed and their construction is explained in Appendix~\ref{annexeEuler}.

For each Hamiltonian $H(k)$, it is possible to create different vector bundles, called the eigenspace bundles, by attaching, to each $k$ in the Brillouin zone (BZ), any number of eigenspaces. In our case, because the annihilation of Dirac points involves two bands, it is natural to study the vector bundle $E$ corresponding to the direct sum of the two eigenspaces of interest, denoted as $F$, as shown in figure~\ref{fig:vectorBundle}. The projection $\pi: E \rightarrow BZ$ associates to each eigenstate the corresponding $k$ in the Brillouin zone. We summarize this data with the following sequence: $$F \xrightarrow{} E\xrightarrow{\pi}BZ$$

\begin{figure}[h!]
    \centering
    \includegraphics[width=0.8\linewidth]{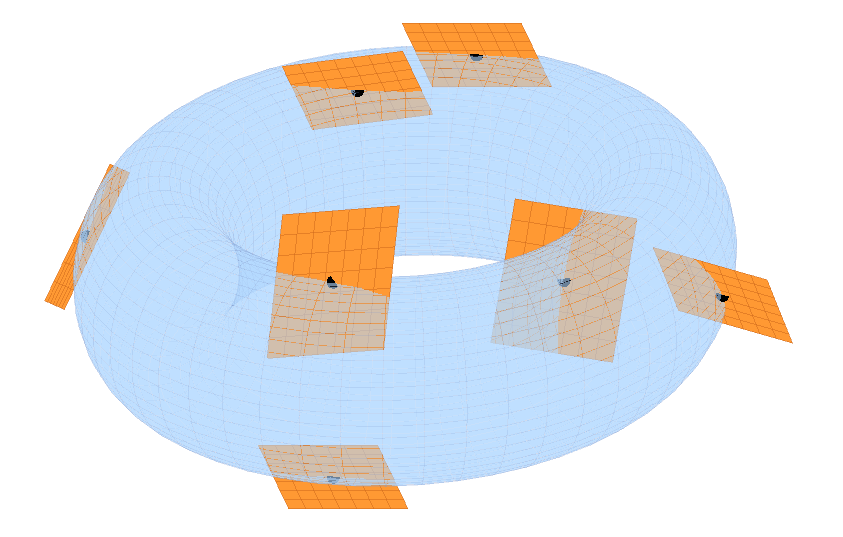}
    \caption{The rank-2 eigenspace bundle $E$ associated with the two principal bands, in the initial configuration of the Bloch Hamiltonian for the kagome lattice. The light blue torus is the Brillouin zone $BZ$, acting as the base space of this bundle. A few fibers are partially represented in orange, each attached to a point (in black) on the torus.}
    \label{fig:vectorBundle}
\end{figure}
By modifying the parameter values $E_A,E_B,E_C,t_{AB},t_{AC},t_{BC}$ of the Bloch Hamiltonian, we get different vector bundles.
A key point is that the two vector bundles before and after a continuous modification of the parameters are isomorphic, as long as the third band stays isolated~\cite{husemoller1994fibre}. This claim is proven in Appendix~\ref{annexeEuler}.
 
We now explain why the phenomenon of Dirac points annihilation, in the context of a system possessing time-reversal symmetry, can be understood with this rank-2 eigenspace bundle. To the best of our knowledge, our approach with bundles has not been fully explored in other articles, and represents one of the main contributions of this work. We also believe that the arguments used are very general, and can be used in many different contexts.

We prove the following proposition, valid if the first band is isolated from the last two :
\begin{proposition}
    If $E_1(k)<E_2(k)\leq E_3(k)$ for all $k\in BZ$, and if the rank-2 bundle associated with bands 2 and 3 is non-trivial, then the principal Dirac points cannot annihilate.
    \label{propositionBundle}
\end{proposition}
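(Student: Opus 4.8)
We would argue by contradiction, exploiting the invariance of the eigenspace bundle under parameter deformations. Suppose the two principal Dirac points annihilate. By definition, this provides a continuous path $H_s(k)$, $s\in[0,1]$, of Hamiltonians of the form~\eqref{Hamiltonien} (real symmetric, satisfying $H_s(k)=H_s(-k)$), with $H_0=H$, such that band~$1$ stays isolated for all $s$, i.e.\ $E_1^s(k)<E_2^s(k)$ for all $k$ and $s$, while the path terminates with a full gap between the top two bands, $E_2^1(k)<E_3^1(k)$ for all $k$. (Adjacent Dirac points between bands~$1$ and~$2$ are allowed to move around; what is forbidden is band~$1$ touching band~$2$, since otherwise the hypothesis of the proposition fails along the path and the rank-$2$ bundle of bands~$\{2,3\}$ is not even globally defined. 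This is precisely why pulling an adjacent Dirac point into the Brillouin zone is what eventually unlocks the annihilation in Fig.~\ref{fig:6panel}.) Since band~$1$ remains isolated along the path, the rank-$2$ eigenspace bundle $\mathcal E_s$ of bands~$\{2,3\}$ is a well-defined real vector bundle over $BZ\cong T^2$ varying continuously with $s$, so the isomorphism-invariance result proven in Appendix~\ref{annexeEuler} gives $\mathcal E_0\cong\mathcal E_1$.

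Next we use the gapped endpoint. Because $H_1$ satisfies $E_2^1<E_3^1$ everywhere, the band-$2$ and band-$3$ eigenlines are globally defined real line bundles $L_2,L_3\to T^2$, with $\mathcal E_1=L_2\oplus L_3$. Hence $\mathcal E_0$ would be isomorphic to a Whitney sum of two real line bundles over the torus, and the proposition reduces to the purely topological claim: \emph{an orientable rank-$2$ real vector bundle over $T^2$ that splits as a direct sum of two line bundles is trivial.} Orientability of $\mathcal E$ is built into the notion of its Euler number, the relevant invariant here; it is also consistent with $L_1\oplus\mathcal E$ being the trivial $\mathbb R^3$-bundle, which forces $w_1(\mathcal E)=w_1(L_1)$. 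Granting the claim, $\mathcal E_1$ --- hence $\mathcal E_0$ --- is trivial, contradicting the hypothesis, so the principal Dirac points cannot annihilate.

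To establish the topological claim we write $\mathcal E_1=L_2\oplus L_3$; orientability forces $w_1(L_2)=w_1(L_3)$, hence $L_2\cong L_3$ (real line bundles over $T^2$ are classified by their first Stiefel--Whitney class in $H^1(T^2;\mathbb Z/2)$). Take a section of $L_2$ transverse to the zero section, with zero set a closed curve $C\subset T^2$, and push $C$ off itself to a transverse copy $C'$ realized as the zero set of a transverse section of $L_3\cong L_2$. Then the Euler number of $\mathcal E_1=L_2\oplus L_3$ equals the signed count of common zeros of the two sections, i.e.\ the intersection number $[C]\cdot[C']=[C]\cdot[C]$ in $H_1(T^2;\mathbb Z)$, which vanishes because the intersection form of an oriented surface is skew-symmetric. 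An orientable rank-$2$ bundle over $T^2$ with zero Euler number is trivial, since these bundles are classified by their Euler class in $H^2(T^2;\mathbb Z)\cong\mathbb Z$. (Equivalently, $\mathcal E_1\cong L_2\otimes_{\mathbb R}\mathbb C$ as a complex line bundle, whose $c_1$ is the image of $w_1(L_2)$ under the Bockstein $H^1(T^2;\mathbb Z/2)\to H^2(T^2;\mathbb Z)$, which is zero because the target is torsion-free.)

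The main obstacle will not be the characteristic-class computation, which is short, but rather making the first step watertight: fixing a precise definition of ``annihilation'' as a deformation through the allowed symmetry class that keeps band~$1$ isolated and ends gapped between bands~$2$ and~$3$, and then invoking exactly the right statement about the invariance of the eigenspace bundle under such deformations (Appendix~\ref{annexeEuler}). The ``band~$1$ stays isolated'' clause is indispensable --- without it the rank-$2$ bundle is not globally defined and the conclusion is false --- and the only delicate point in the topological step is keeping track of orientability, which is already implicit in phrasing the proposition in terms of the Euler number.
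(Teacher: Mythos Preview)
Your argument is correct and follows the same contrapositive skeleton as the paper: assume annihilation, use the deformation invariance of the rank-$2$ bundle (Appendix~\ref{annexeEuler}) to pass to a fully gapped endpoint, then split $\mathcal E_1=L_2\oplus L_3$. The divergence is in the last step. The paper invokes time-reversal symmetry directly via Appendix~\ref{annexeTimereversal} to conclude that each rank-$1$ bundle $L_2,L_3$ is individually trivial, so their Whitney sum is trivial in one line. You instead prove the more general fact that any \emph{orientable} rank-$2$ bundle over $T^2$ that splits as a sum of line bundles has vanishing Euler class, via the Bockstein (or the self-intersection) argument. This is a nice piece of topology, but note that it does not actually bypass time-reversal: orientability of $\mathcal E$ is not part of the proposition's hypotheses, and your observation $w_1(\mathcal E)=w_1(L_1)$ only yields $w_1(\mathcal E)=0$ once you know $L_1$ is trivial---which is exactly the content of Appendix~\ref{annexeTimereversal}. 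So both proofs rest on the same appendix; the paper simply applies it to $L_2,L_3$ rather than to $L_1$, and is correspondingly shorter. A minor expository slip: your parenthetical that ``adjacent Dirac points between bands~$1$ and~$2$ are allowed to move around'' while ``band~$1$ touching band~$2$'' is forbidden is self-contradictory, since an adjacent Dirac point \emph{is} such a touching; under the hypothesis $E_1<E_2$ there are no adjacent Dirac points anywhere along the deformation.
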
 

\begin{proof}
    We prove the contrapositive statement: if the principal Dirac points can annihilate, then the rank-2 bundle is trivial.
    
    Suppose that the Dirac points can be annihilated. This means that  the two bands are gapped after modifying the parameters. This gap allows us to consider the rank-1 eigenspace bundles associated with the second and third eigenvalues, respectively. In systems with time-reversal symmetry, such as the kagome lattice, those rank-1 bundles are guaranteed to be trivial. (This property, proven in detail in Appendix~\ref{annexeTimereversal}, does not hold for general complex vector bundles, nor for real vector bundles in the absence of symmetries). Subsequently, the rank-2 bundle can be decomposed as a Whitney sum (direct sum) of two trivial bundles, making it trivial. This shows the contrapositive statement, and in turn the proposition.
\end{proof}

%ad abs: suppose they can annihilate, then the rank 2 can be separated into rank 1, which are guaranteed to be trivial. But our rank 2 was not trivial, so it should not be decomposable into a sum of trivial rank 1. Therefore annihilation should be impossible.

Therefore, a natural question to ask is when the eigenspace bundle is trivial. In order to detect the triviality of a vector bundle, as we do traditionally in topology, we use topological invariants. The relevant notions of topological invariants for vector bundles are characteristic classes and characteristic numbers~\cite{milnor_characteristic_nodate}, which measure the ``twisting" in vector bundles.

\section{The Euler number}\label{Eulernumber}

The Euler class of a real, oriented fiber bundle is a characteristic class~\cite{tu_differential_2017}, which appears naturally when studying the obstruction to define a global section on the whole Brillouin zone, that is to say, an eigenstate $u(k)$ defined for every $k$~\cite{HatcherBundle}.

For our purposes, integrating the Euler class on the Brillouin zone yields the Euler number, or Euler characteristic, just like the Chern number can be obtained by integrating the Berry curvature. This theorem is the celebrated Chern-Gauss-Bonnet theorem, that we state here \vspace{0.4cm}~\cite{bellGaussBonnet}:

\begin{theorem}
    Let $E$ be a real orientable Riemannian vector bundle of even rank $n=2k$ over a  compact manifold without boundary $M$ of dimension $n$. The Euler number satisfies: $$\left(\frac{1}{2 \pi}\right)^k \int_M \operatorname{Pf} (\Omega)=e_2(E)$$ where $\operatorname{Pf}$ is the Pfaffian operator, $\Omega$ is a curvature matrix obtained from any connection on $E$, and $e_2(E)$ is the Euler number.
\end{theorem}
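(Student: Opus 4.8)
The plan is to prove the identity by Chern--Weil theory in three stages: (i) show that $\operatorname{Pf}(\Omega)$ is a closed $n$-form on $M$ whose de Rham class does not depend on the connection used to compute $\Omega$; (ii) identify the normalized class $\bigl[(2\pi)^{-k}\operatorname{Pf}(\Omega)\bigr]\in H^n_{\mathrm{dR}}(M)$ with the real Euler class $e(E)$; and (iii) integrate a representative of $e(E)$ over $M$ and recognize the result as the Euler number $e_2(E)=\langle e(E),[M]\rangle$. Throughout we take $M$ oriented, so that both the integral and the fundamental class $[M]$ make sense.

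For stage (i), fix a connection $\nabla$ on $E$ compatible with the fiber metric and the orientation. In a local oriented orthonormal frame its curvature $\Omega$ is an antisymmetric matrix of $2$-forms, i.e.\ takes values in $\mathfrak{so}(n)$, and transition maps between such frames lie in $\mathrm{SO}(n)$. Since the Pfaffian is an $\mathrm{SO}(n)$-invariant polynomial of degree $k=n/2$ on $\mathfrak{so}(n)$, the locally defined $n$-forms $\operatorname{Pf}(\Omega)$ patch into a global $n$-form on $M$. Closedness, $d\,\operatorname{Pf}(\Omega)=0$, follows from the second Bianchi identity combined with the infinitesimal $\mathrm{SO}(n)$-invariance of $\operatorname{Pf}$. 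Connection independence is the standard transgression argument: the affine space of metric connections is convex, so one interpolates linearly $\nabla_t=(1-t)\nabla_0+t\nabla_1$ and writes $\operatorname{Pf}(\Omega_1)-\operatorname{Pf}(\Omega_0)=d\,TP$ for an explicit transgression form $TP$ built from $\tfrac{d}{dt}\nabla_t$ and $\Omega_t$. Hence $\bigl[(2\pi)^{-k}\operatorname{Pf}(\Omega)\bigr]$ is a well-defined element of $H^n(M;\mathbb{R})$.

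Stage (ii) is the crux, and where I expect the main difficulty. I would invoke the oriented splitting principle: pull $E$ back along a map $p\colon\widetilde M\to M$ on which it becomes isomorphic to an orthogonal direct sum $L_1\oplus\cdots\oplus L_k$ of oriented rank-$2$ subbundles and on which $p^*\colon H^*(M)\to H^*(\widetilde M)$ is injective (e.g.\ an iterated oriented Grassmann-$2$-plane bundle). Equipping the sum with the block-diagonal connection makes $\Omega$ block diagonal, and the algebraic identity $\operatorname{Pf}\!\bigl(\mathrm{diag}(\Omega_1,\dots,\Omega_k)\bigr)=\prod_i\operatorname{Pf}(\Omega_i)$, together with the connection independence from stage (i), reduces the problem to oriented rank-$2$ bundles. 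Each $L_i$, with its metric and orientation, carries a canonical complex structure making it a complex line bundle, and then $(2\pi)^{-1}\operatorname{Pf}(\Omega_i)$ is precisely the standard Chern--Weil representative of $c_1(L_i)=e(L_i)$ --- the elementary line-bundle normalization. The Whitney product formula $e(L_1\oplus\cdots\oplus L_k)=\prod_i e(L_i)$ then matches $\prod_i(2\pi)^{-1}\operatorname{Pf}(\Omega_i)$ with $p^*e(E)$, and injectivity of $p^*$ descends the equality to $M$. An alternative route, avoiding the splitting principle, is to choose a section $s$ of $E$ transverse to the zero section --- its zero set $Z(s)$ is then a finite set of points since $\dim M=\operatorname{rank} E$ --- and to show by a localization computation near $Z(s)$ that $(2\pi)^{-k}\operatorname{Pf}(\Omega)$ is Poincaré dual to $Z(s)$, which represents $e(E)$ by definition; this is essentially the Mathai--Quillen / Thom-class argument.

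With stage (ii) in hand, stage (iii) is immediate: integrating any closed form representing $e(E)\in H^n(M;\mathbb{R})$ over the closed oriented $n$-manifold $M$ computes the Kronecker pairing $\langle e(E),[M]\rangle$, which is by definition the Euler number $e_2(E)$. Specializing to $E=TM$ and $k=1$ recovers the classical Gauss--Bonnet theorem $\tfrac{1}{2\pi}\int_M K\,dA=\chi(M)$ through Poincaré--Hopf; this is the instance of the statement most directly relevant to the rank-$2$ eigenspace bundles considered in this work.
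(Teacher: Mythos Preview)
Your proposal is a correct and standard Chern--Weil proof sketch. However, the paper does not actually prove this theorem: it is stated with a citation, and the discussion in Appendix~\ref{annexeEuler} only sets up the ingredients in the rank-$2$ case of interest --- defining the connection and curvature matrices in an orthonormal frame, invoking the invariance of the Pfaffian, and computing $\operatorname{Pf}(\Omega)=dA=F$ explicitly --- before deferring the identification of the integral with the Euler number to the theorem itself (``the content of the Chern--Gauss--Bonnet theorem is that the integral of the Euler class on the base manifold gives the Euler number'').

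Your stage~(i) is essentially what the paper sketches in that appendix; your stages~(ii) and~(iii), via the oriented splitting principle (or the Mathai--Quillen localization alternative), supply precisely what the paper leaves to the literature. So your treatment is strictly more complete: the paper treats the statement as a cited black box, whereas you outline a genuine proof. One small remark: you correctly note that $\operatorname{Pf}$ is $\mathrm{SO}(n)$-invariant, which is what is needed; the paper's appendix claims $O(2)$-invariance, which is not quite right since $\operatorname{Pf}$ changes sign under orientation-reversing gauge transformations --- your formulation is the accurate one.
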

\vspace{0.4cm}
 In other words, the integration of a local geometrical quantity, the curvature, is related to a global topological quantity, the Euler number.

In our case, we want to study degeneracies between the two principal bands, that is, we will consider the eigenspace bundle introduced in Section~\ref{annihilation} whose fiber is the direct sum of the eigenspaces associated with the two principal bands, namely 2 and 3. In that case, the bundle is of rank $n=2$ over the Brillouin zone, which is a 2-dimensional torus. 

First, let's assume that the last two bands are isolated from the first one, that is, for all $k$, $E_{1}(k)< E_{2}(k)\leq E_{3}(k)$. We can define the Euler connection associated with those two bands as the following differential form of degree 1 over the Brillouin zone: $$A_{k}=\langle u_2(k) \mid d_k u_3\rangle$$where $d_ku_3$ is the total derivative (or differential)~\cite{LeeSM} of the function $u_3:BZ\rightarrow \mathbb{R}^3$ at point $k\in BZ$. 

The Euler connection is very similar to the usual Berry connection, but there are two main differences: it concerns two eigenstates, and it is adapted for real eigenstates only. We prove in Appendix~\ref{annexeEuler} that the Euler connection defined this way is indeed a connection on the rank-2 vector bundle. 

This quantity is not gauge invariant: when performing a gauge transformation of the form $(u_2(k),u_3(k))\rightarrow (\pm u_2(k),\pm u_3(k))$, the Euler connection may change sign. More generally, when the bands 2 and 3 might be degenerate, the associated eigenspace is 2-dimensional, and the allowed gauge transformations are $2\times 2$ orthogonal matrices $X(k) \in O(2)$. If we write $u_2'(k)=X(k)u_2(k)$ and $u_3'(k)=X(k)u_3(k)$, we get the following transformation rule for the Euler connection: $$A_k^{\prime}=A_k+\left\langle X(k) u_2(k) \mid u_3(k)d_kX \right\rangle=A_k+\alpha_k$$

Now that we are equipped with the connection between two bands, which specifies the parallel transport, we define the Euler curvature, given by 
the exterior derivative of the Euler connection~\cite{tu_differential_2017}:
$$F=dA$$
It is a differential form of degree 2.
Contrary to the Euler connection, it is gauge invariant.
Indeed, taking the exterior derivative of the transformed Euler connection found earlier, one obtains: $$F^{\prime}=d A^{\prime}=F+d\alpha$$ and we must show that $d\alpha=0$. In the real inner product, the adjoint of $X(k)$ is the transpose $X(k)^\top$, so: $$\alpha_k=\left\langle X(k) u_2(k) \mid d_k X\: u_3(k)\right\rangle=\left\langle u_2(k) \mid X(k)^{\top} d_k X \:u_3(k)\right\rangle$$
Let's write explicitly $X(k)\in O(2)$ in the basis $(u_2(k),u_3(k))$ of the fiber: $$X(k)=\left(\begin{array}{cc}
\cos \theta(k) & -\sin \theta(k) \\
\sin \theta(k) & \cos \theta(k)
\end{array}\right)$$
We find: $$X(k)^{\top} d_k X=\left(\begin{array}{cc}
0 & -1 \\
1 & 0
\end{array}\right) d_k \theta=\mathbf{j}\: d_k \theta$$
where we have identified the matrix $\left(\begin{array}{cc}
0 & -1 \\
1 & 0
\end{array}\right)$ in $SU(2)$ with the quaternion $\mathbf{j}$. The appearance of quaternions in this context will be developed in more detail in Section~\ref{sec:quaternions}, and a review is given in Appendix~\ref{annexeQuaternion}.

Therefore: $$\begin{aligned}
\alpha_k=\left\langle u_2(k) \mid \mathbf{j} u_3(k)\right\rangle d_k \theta & =\left\langle u_2(k) \mid-u_2(k)\right\rangle d _k\theta \\
& =-d _k\theta=d_k(-\theta)
\end{aligned}$$
or in other words, $\alpha$ is the differential of a function. With $d^2=0$ for the exterior derivative~\cite{tu_differential_2017}, we obtain that $F'=F$.

Similarly to the Chern number, we find the Euler number of two bands by integrating the Euler curvature over the whole Brillouin zone~\cite{ahn_failure_2019}:
$$e_2=\frac{1}{2\pi}\int_{B Z} F_k$$
Note that the Brillouin zone, as a torus, is a 2-dimensional compact manifold, which makes the integration of this differential form of degree 2 possible.

\vspace{0.5cm}
However, when the two first bands are not isolated from the third one in the whole Brillouin zone, their Euler number is ill-defined, because the rank 2 vector bundle defined in Section~\ref{annihilation}  doesn't exist anymore. This is the case in our configuration in panels $(d)-(f)$ of Fig.~\ref{fig:6panel}.

For this reason, we use a related concept called the patch Euler class. Although we can't define the Euler number on the whole Brillouin zone, we can still define a version of it on a smaller region $D$ which doesn't contain adjacent Dirac points: $$e_2(D)=\frac{1}{2 \pi}\left[\int_{D} F_k-\oint_{\partial D} A_{k}\right]$$
The patch Euler number of a region quantifies the topological charge of the principal Dirac points inside this region. It can take integer values as well as half-integer values~\cite{peng_phonons_2022}.
Notice that when $D$ is taken to be the whole Brillouin zone, which doesn't have a boundary, we find the usual Euler number.

It is important to note that the patch Euler class of a region does not depend only on the Dirac points it contains, but also on the specific choice of gauge chosen. Furthermore, the sign of the patch Euler class is gauge dependent~\cite{peng_phonons_2022}.

Because the vector bundle is not trivial, it is not possible to find a global, continuous choice of eigenstate (a so-called global section) on the whole Brillouin zone. However, it is still possible to get, in some sense, an \textit{almost} global section by removing a curve from the Brillouin zone, and the section will be well-defined away from this curve. This kind of curve is called a Dirac string. Dirac strings were initially introduced in the context of electromagnetism as a mathematical construct to allow the existence of magnetic monopoles~\cite{DiracString}. When traversing a Dirac string, the section undergoes a gauge transformation.  It turns out that the extremities of the Dirac strings are exactly the Dirac points~\cite{bouhon_non-abelian_2020}. Note that the Dirac strings are not physically observable, meaning that they only appear because of the impossibility of a global gauge. 

\begin{figure}[h!]
    \centering
    \includegraphics[width=1\linewidth]{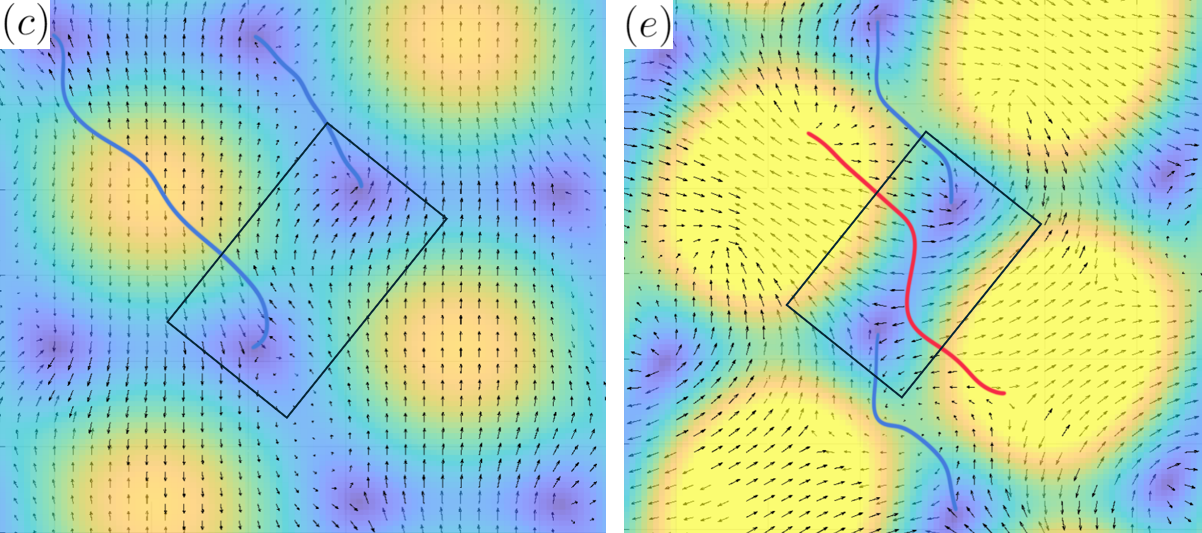}
    \caption{Choice of an eigenstate corresponding to the second band, in the configurations $(c)$ and $(e)$ of Figure~\ref{fig:6panel}. The background color still represents the difference of energies between band 2 and 3. The relevant Dirac strings have been added: in blue for the principal Dirac strings, and in red for the adjacent Dirac strings. The patch Euler number is computed in the black rectangle.}
    \label{fig:Dirac}
\end{figure}

In order to illustrate this concept, we plot in Fig.~\ref{fig:Dirac}  the eigenstate corresponding to the second band in two configurations, one where the annihilation is possible and the other where it is obstructed. We use the code provided in~\cite{bouhon_non-abelian_2020} to efficiently compute the patch Euler class of a rectangle in those two configurations. We observe that, after the tuning of the external parameters $E_A$ and $t_{BC}$, an adjacent Dirac string, with extremities two adjacent Dirac points, has appeared in panel $(e)$ and is going through the region of integration (patch). Note that this string could not exist in panel $(c)$ because, at this stage, the third band was still isolated and therefore no adjacent Dirac point was present. This adjacent Dirac string has the effect of reversing the contribution of one of the Dirac points when computing the patch Euler number, which goes from $-1$ to 0. As a consequence, the Dirac points acquire the ability to annihilate in the rectangle.

\section{Quaternionic charges}\label{sec:quaternions}

This alternative point of view, introduced in~\cite{wu_non-abelian_2019}, is based on homotopy theory. Just as the usual winding number in 2D counts how many times a path encircles a point, quaternionic charges describe a more subtle type of winding which incorporates the three-band structure of our system and the associated multidimensional rotation. 
We start with the observation that rotating around a Dirac point adds a phase on the two eigenvectors of the corresponding bands (see Appendix~\ref{annexeRotation}).

A real symmetric $3\times 3$ Hamiltonian is characterized by three orthonormal eigenvectors $u_1(k),u_2(k),u_3(k)$, which can be ordered according to increasing energies. This yields a matrix $F(k)$ in $O(3)$. However, each eigenvector is only defined up to a phase of $\pm1$. Consequently, the frame space can be described as the quotient space $O(3) / \mathbb{Z}_2^3$. The main problem to investigate this space is that $O(3)$ is not connected. To this purpose, we explain in Appendix~\ref{annexeQuaternion} that it can be replaced by a better candidate, $SU(2)/Q_8$, where $Q_{8}=\{\pm 1,\pm \mathbf{i},\pm \mathbf{j},\pm \mathbf{k}\}$ is the quaternion group of order 8.
Therefore, encircling a Dirac point is equivalent to traversing a loop in this space. It is thus natural to compute its fundamental group.
It can be demonstrated that the fundamental group of this space is given by $Q_{8}$. A proof of this fact is given in Appendix~\ref{annexeQuaternion}. Therefore, we can define the quaternionic charge of a loop $\gamma$ as the quaternion $c(\gamma)$ of the corresponding homotopy class. It is important to note that this reasoning must be conducted away from Dirac points, as this construction fails at such points (since the possible gauge transformations for each eigenvector are no longer simply $\pm 1$).

Processes which can be described by quaternionic charges are often referred to as being ``non-abelian" because $Q_8$ is not commutative, for example, $\mathbf{ij} = -\mathbf{ji} \neq \mathbf{ji}$.
What is particularly interesting is that the quaternion characterizes the type of Dirac point that is encircled by the loop~\cite{wu_non-abelian_2019}. This correspondence is proven in Appendix~\ref{annexeDiracBelt}.
\begin{itemize}
\item 1 corresponds to a loop with no Dirac points inside, or two Dirac points which can annihilate inside the loop.
\item $\pm \mathbf{i}$ corresponds to a principal Dirac point.
\item $\pm \mathbf{k}$ corresponds to an adjacent Dirac point.
\item $\pm \mathbf{j}$ corresponds to a pair of each type of Dirac points.
\item $-1$ corresponds to a pair of Dirac points of the same type and same orientation, which cannot annihilate inside the loop.
\end{itemize}

%charge quaternionique des différents Dirac

We might be tempted to define the quaternionic charge of a Dirac point as \textit{the} quaternion corresponding to any small loop circling around this Dirac point. However, this is ill-defined, as two  loops $\gamma_1$ and $\gamma_2$ going on both sides of an adjacent Dirac points are conjugated in the fundamental group by the loop $c$ circling around the adjacent Dirac point: $\gamma_1=c \gamma_2 c^{-1}$.  Therefore, they have opposite charge, as shown in Fig.~\ref{fig:conjugaison}. In the same way, reversing a loop corresponds to taking the inverse of its quaternionic charge. Note that, for clarity, orientation has been omitted in all diagrams, but all loops should be understood as having a counter-clockwise orientation.

\begin{figure}[h!]
    \centering
    \includegraphics[width=0.5\linewidth]{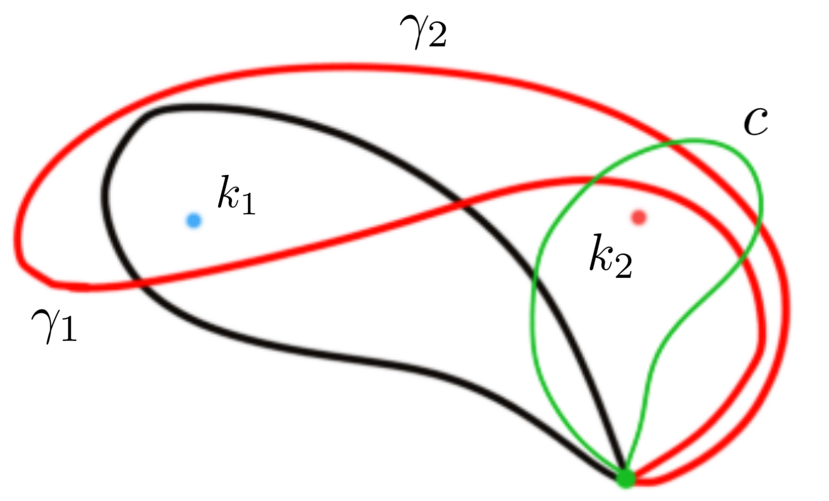}
    \caption{Two loops in reciprocal space $\gamma_1$ (in red) and $\gamma_2$ (in black) circling around the Dirac point $k_1$, with base point shown in green. They are conjugated by the loop $c$ circling around the adjacent Dirac point: $\gamma_1=c \gamma_2 c^{-1}$.}
    \label{fig:conjugaison}
\end{figure}

Therefore, it is only possible to define the quaternionic charge of a Dirac point as one of the five conjugacy classes described above.

Let's see how the quaternionic charge of a principal Dirac $k_1$ point changes when this point rotates around an adjacent Dirac point $k_2$. This process, called braiding, is shown in Fig. ~\ref{fig:rotationAutourDirac}.
\begin{figure}[h!]
    \centering
    \includegraphics[width=0.95\linewidth]{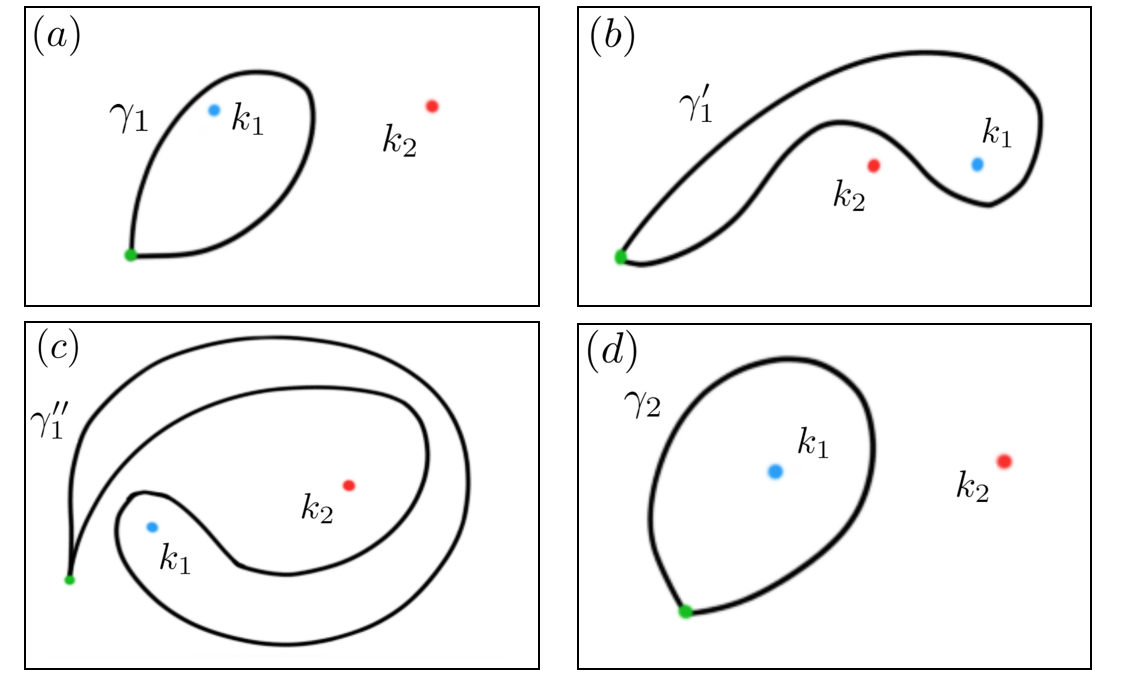}
    \caption{Braiding of a principal Dirac point $k_1$ with an adjacent Dirac point $k_2$ in reciprocal space. We follow the loop $\gamma_1$ around $k_1$, becoming successively $\gamma_1'$ and $\gamma_1''$. }
    \label{fig:rotationAutourDirac}
\end{figure}

When the braiding is complete on panel $(c)$, the configuration is back to the initial position of the Dirac points, but the rotated loop $\gamma_1''$ is conjugated with the initial loop $\gamma_1$, which means that the quaternionic charge of $k_1$ with respect to $\gamma_1$ has been reversed during the braiding, according to the argument discussed above and illustrated by Fig.~\ref{fig:conjugaison}.

Now we can prove with quaternionic charges that, in the configuration of the kagome lattice showcased in Fig.~\ref{fig:6panel}, the topological obstruction to the annihilation is removed after the braiding. This is shown in Fig.~\ref{fig:braidingkagomequaternion}: no matter the choice of initial loop $\gamma_1$ encircling the two principal Dirac points, one of the two adjacent Dirac points has to cross $\gamma_1$ during the braiding, which has the effect of reversing the orientation of the loop, according to the reasoning in Fig.~\ref{fig:conjugaison}. This shows that $\gamma_1$ and $\gamma_2$ carry an opposite quaternionic charge, which proves that braiding a principal Dirac point with an adjacent one alters their capacity to annihilate.

\begin{figure}[h!]
    \centering
    \includegraphics[width=1\linewidth]{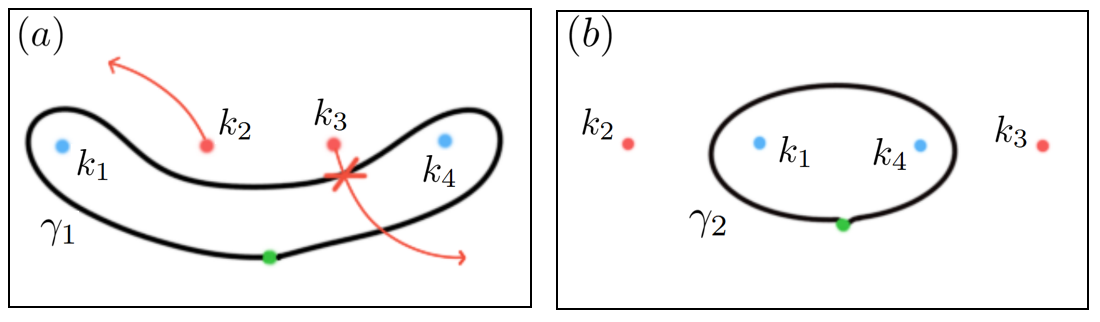}
    \caption{Braiding of two principal Dirac points (in blue) with two adjacent Dirac points (in red) in reciprocal space for the kagome lattice configuration shown in Fig.~\ref{fig:6panel}. The chosen loop around the principal Dirac points is shown in black. Red cross shows that one of the adjacent Dirac points has to cross the loop during the braiding.
    }
    \label{fig:braidingkagomequaternion}
\end{figure}

\section{Implementation in photonic systems}\label{sec6}

The approaches to the description of the behavior of the Dirac points based on the Euler number and on the quaternionic charges were suggested in previous works~\cite{lim_dirac_2020,bouhon_non-abelian_2020} which were considering rather abstract systems characterized by artificial tight-binding Hamiltonians, without going into the details of the realistic implementation of these Hamiltonians. In particular, the physical implementations of the transformations leading to the motion and annihilation of the Dirac points or the obstruction of this annihilation were not discussed.

The advantage of our model Hamiltonian describing a kagome lattice is that it can be relatively easily and with a good precision implemented in a photonic system, such as a lattice of coupled pillar microcavities~\cite{harder2021kagome} or an interference pattern created in an atomic vapor cell in the regime of electromagnetically-induced transparency (EIT)~\cite{yu2022optically}%josab.

Although the configuration we described above could be implemented directly, in this section we propose an alternative which is even easier to implement, as it is very close to the unmodified kagome lattice, and requires the modification of only one parameter, which is the energy $E_A$ of the lattice site $A$. This parameter can be modified \textit{in-situ} by using an optical pump in both suggested implementations (coupled pillar cavities~\cite{harder2021kagome} or 
EIT-lattice~\cite{yu2022optically}). This alternative is shown in Fig.~\ref{fig:kagomealternatif}.
We start with two principal Dirac points in $(a)$. They merge without annihilation in $(b)$, bounce in the orthogonal direction in $(c)$, and annihilate on the edge of the reciprocal cell in $(d)$.

\begin{figure}[h!]
    \centering
    \includegraphics[width=1\linewidth]{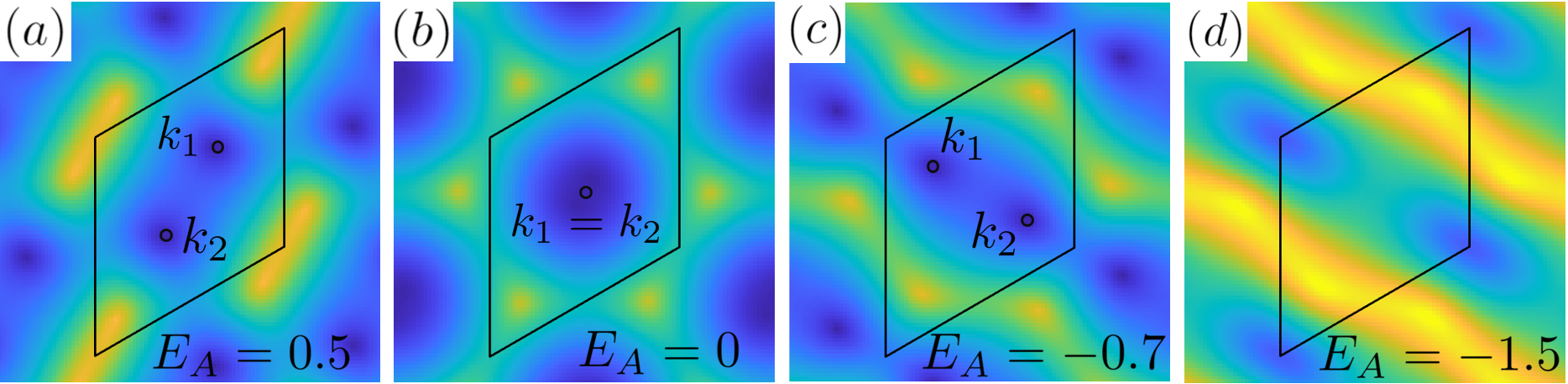}
    \caption{Alternative braiding configuration. The rhombus is a choice of a reciprocal primitive cell. In panel $(a)$, the initial configuration for the tuning parameters is $E_A=0.5$, $t_{AB}=1$, $t_{AC}=1$, $E_B=0$,  $t_{BC}=1$, $E_C=0$. The parameter $E_A$ goes from $0.5$ to $-1.5$ as shown in the bottom right corner of each panel.}
    \label{fig:kagomealternatif}
\end{figure}

The main difference is that this alternative configuration uses the periodicity of the reciprocal cell. However, the arguments presented to study the obstruction to the annihilation of Dirac points remain unchanged. Using the patch Euler number, it is a direct calculation using the algorithm from~\cite{bouhon_non-abelian_2020}, as was shown in Fig.~\ref{fig:Dirac}.

\begin{figure}[h!]
    \centering
    \includegraphics[width=0.5\linewidth]{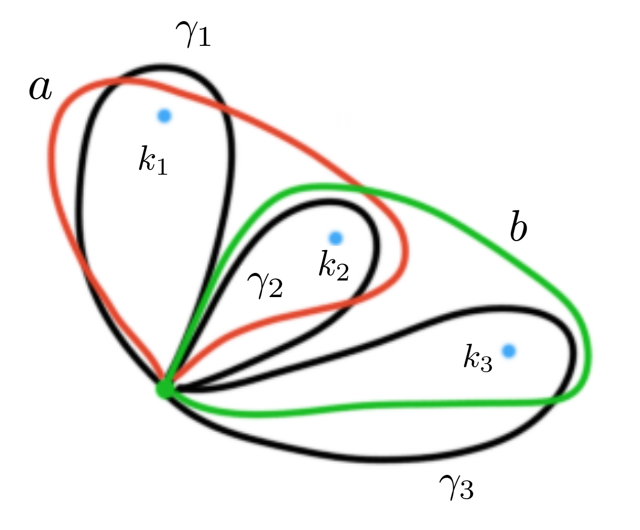}
    \caption{The three principal Dirac points in the alternative model shown in Fig.~\ref{fig:kagomealternatif}(c). The loops denoted as $\gamma_1,\gamma_2,\gamma_3$ (in black) each encircle a Dirac point. The loop denoted as $a$ (in red) and $b$ (in green) encircle two Dirac points.} 
    \label{fig:alternativeKagomeBraiding}
\end{figure}

Using quaternions, the reasoning is slightly different (Fig.~\ref{fig:alternativeKagomeBraiding}). We use three loops $\gamma_1,\gamma_2,\gamma_3$ around each of Dirac points. We define the compositions $a=\gamma_1 \cdot \gamma_2$ and $b=\gamma_2 \cdot \gamma_3$. We see that, $a \cdot \gamma_3 = \gamma_1\cdot b$. Because $k_2$ and $k_3$ are able to annihilate, the quaternionic charge $c(b)$ is 1 ; and because $\gamma_1$ and $\gamma_3$ are of opposite charge, we get $c(a)=c(\gamma_1) c(b) c(\gamma_3^{-1})=-1$, which shows that $k_1$ and $k_2$ cannot annihilate inside $\gamma_1$.
\vspace{0.5cm}

To demonstrate the possibility of experimental implementation of the full braiding protocol, we perform numerical simulations beyond the tight-binding limit, by solving the stationary Schrödinger equation  
\begin{equation}
    -\frac{\hbar^2}{2m}\Delta\psi(x,y)+\left(U_0(x,y)+U_1(x,y)\right)\psi(x,y)=E\psi(x,y)
    \label{Schr}
\end{equation}
on a grid with periodic boundary conditions of the Floquet type, accounting for the wave vector of the plane wave. This allows to find the Bloch functions for the corresponding wave vector, together with the associated eigenvalues. The equation is mapped to an eigenvalue problem for a sparse matrix solved using the ARPACK library~\cite{lehoucq1998arpack}. 
In Eq.~\eqref{Schr}, $\psi$ is the amplitude of the electric field (we restrict the consideration to a single polarization), $x,y$ are the coordinates in the plane of the photonic structure, with the wave vector in the transverse direction $k_z$ being quantized (in a microcavity) or fixed (in a vapor cell); $\hbar$ is the Planck's constant; $m$ is the effective mass of the photonic or polaritonic mode (it depends on $k_z$); $U_0(x,y)$ is the effective potential determined by the etching of the microcavity or by the interference of the EIT pump beams. This potential is constant. $U_1(x,y)$ is a variable contribution concerning only a single type of sites (e.g. the A-sites) and allowing to vary the on-site energy. It is created by optical means in all cases~\cite{yu2022optically}.

A typical result of such simulations is shown in Fig.~\ref{fig:KagNum}. Left column shows the numerically calculated dispersion $E(k_x,k_y)$ based on Eq.~\eqref{Schr} (beyond the tight-binding approximation). Right column shows the cut of the dispersion as a function of $k_x$ for $k_y=0$: the black solid line corresponds to the same numerical simulation (Eq.~\eqref{Schr}), while the red dashed curve is based on the tight-binding nearest-neighbor Hamiltonian~\eqref{Hamiltonien}. The top panel (a,b) corresponds to the reference configuration of the unperturbed kagome lattice, exhibiting a double Dirac point at the $\Gamma$ point (parabolic band touching). The bottom panel (c,d) corresponds to the perturbed case with $E_A<0$, exhibiting two Dirac points close to the $\Gamma$ point (marked by white arrows in panel (c)). Further increase of the perturbation leads to the annihilation of the DPs, whereas this annihilation is obstructed for $E_A=0$ at the $\Gamma$ point (panel (a)).

The parameters for the numerical simulation and for the tight-binding model correspond to the realistic parameters of patterned  cavities~\cite{harder2021kagome}: trap radius 1.7~$\mu$m, nearest-neighbor distance 5~$\mu$m, photonic mass $10^{-5}m_0$ ($m_0$ is the free electron mass), trapping potential depth 6~meV. The resulting tunneling coefficient for the tight-binding model is $|t|\approx 0.126$~meV (the unperturbed band width is $6t\approx 0.76$). The perturbation magnitude is $E_A=-0.12$~meV.
Using the parameters of an alternative possible implementation (EIT-based) does not lead to a qualitative change of the results~\cite{yu2022optically}, because of their topological robustness.

\begin{figure}[tbp]
    \centering
    \includegraphics[width=0.9\linewidth]{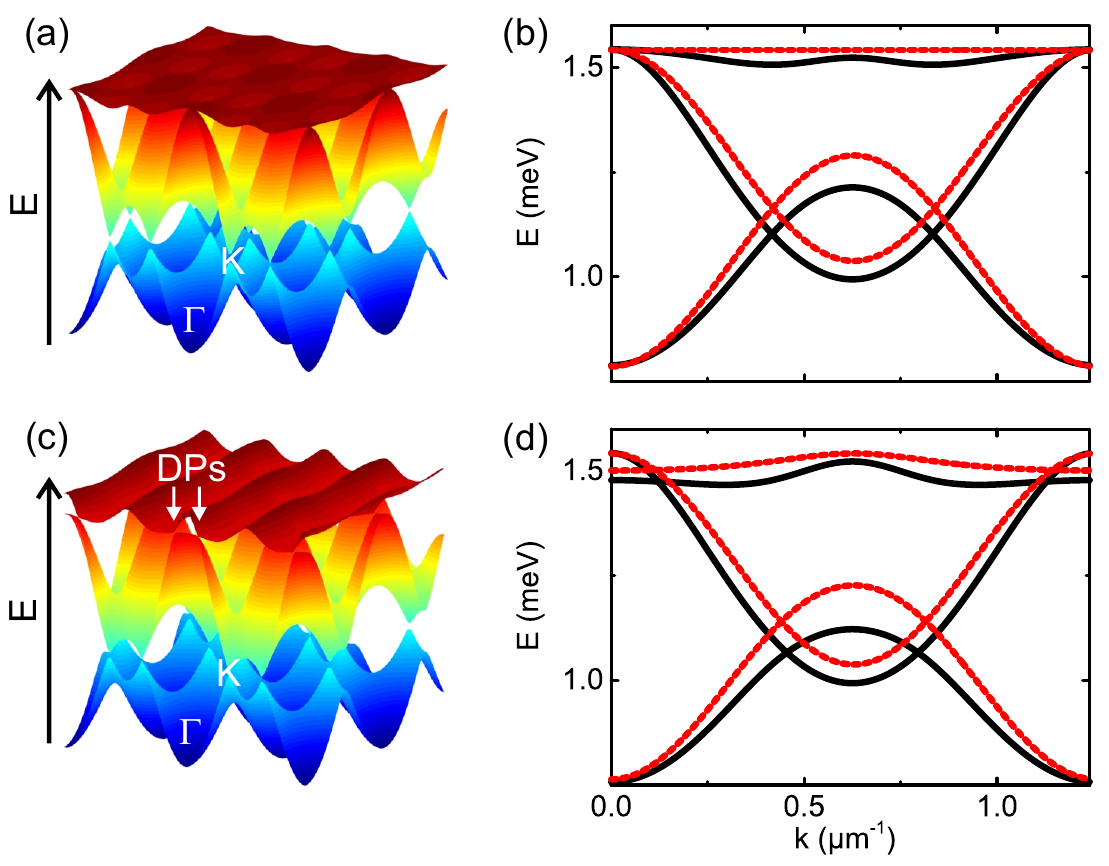}
    \caption{Dispersion of the Kagome lattice: (a,b) - unperturbed, (c,d) - perturbed by $E_A<0$. Numerically calculated beyond tight-binding (Eq.~\eqref{Schr}): (a,c) as a function of $k_x,k_y$ and (b,d) as a function of $k_x$ for $k_y=0$ (black solid line). Analytical tight-binding nearest-neighbor: (b,d) (red dashed line).}
    \label{fig:KagNum}
\end{figure}

We note that the experimental work~\cite{yu2022optically} implements exactly the configuration that we suggest for the study of the annihilation of the Dirac points in the kagome lattice, with the optical potential on the sites of a single type $E_A\neq 0$. The authors have demonstrated the possibility to create different types of potentials, including those which can be expected to lead to the annihilation of the Dirac points. However, that work did not provide any evidence of the annihilation or its obstruction. Indeed, the annihilation of the Dirac points can be expected to result in the suppression of the conical refraction~\cite{zhang2019particlelike} associated with the Dirac points~\cite{Hamilton1837,Lloyd1833}). The detection of such behavior will be the next step for the future studies.

\section{Conclusion}\label{sec7}In this work, we studied the conditions for Dirac point annihilation in a three-band kagome lattice model with time-reversal symmetry. We found that while tuning the system parameters can bring Dirac points together, their annihilation and the resulting band gap opening can be topologically obstructed. Our analysis shows that the non-triviality of a rank-2 eigenspace bundle can obstruct the annihilation, and this non-triviality is detected using the Euler class. The non-abelian braiding of Dirac points, characterized by quaternionic charges, gives a different point of view for the same question. The success of these two approaches highlights a connection between vector bundles in differential geometry and non-abelian fundamental groups in homotopy theory, and this connection is still not fully explored. Finally, we suggest a realistic protocol for experimental verification in photonics. Our results could be applied to more complex systems in future works.

\backmatter

\bmhead{Acknowledgements}

We thank J.N. Fuchs, A. Bouhon, T. Lejeune and M. Roche for valuable discussions.

\section*{Declarations}

\begin{itemize}
\item Funding: This work was supported by the European Union’s Horizon 2020 program, through a FET Open research and innovation action under the grant agreements No. 964770 (TopoLight).
 Additional support was provided by the ANR Labex GaNext (ANR-11-LABX-0014), the ANR program "Investissements d'Avenir" through the IDEX-ISITE initiative 16-IDEX-0001 (CAP 20-25), the ANR project MoirePlusPlus (ANR-23-CE09-0033), the ANR project "NEWAVE" (ANR-21-CE24-0019), and the ANR SFRI project "Graduate Track of Mathematics and Physics" (GTMP) of University Clermont Auvergne.
\item Conflict of interest: The authors declare that they have no conflict of interest
\item Ethics approval: Not applicable
\item Consent to participate: Not applicable
\item Consent for publication: Not applicable
\item Data availability: Available upon reasonable request
\item Code availability: Not applicable
\item Authors' contributions: All authors contributed to the study conception and design. All authors read and approved the final manuscript.
\end{itemize}

\begin{appendices}

\section{Construction of the kagome lattice}\label{annexeKagome}

Our purpose here is to detail the construction of the kagome lattice with the tight-binding model. Following~\cite{bena_remarks_2009}, we pay great attention to the choice of basis vectors to avoid confusion.
Let's start from a generic tight-binding Hamiltonian with multiple atoms per unit cell:
$$H=-\sum\limits_{R,\alpha,R',\beta}t_{\alpha,\beta}(R,R')|R,\alpha \rangle \langle R',\beta |$$
where $R,R'$ represent the positions of adjacent primitive cells, $\alpha,\beta$ are indices representing the different sites inside a cell, $A,B,C...$, and $t_{\alpha,\beta}(R,R')>0$ are the corresponding tunneling coefficients.

We will actually denote the cell positions as $R_{i,j}$. Note that we arbitrarily chose the cell position as the position of the cell center.
We will denote, within each primitive cell $(i,j)$, the position of atoms as $R_{ij}^A$, $R_{ij}^B$, $R_{ij}^C$.

In Fig.~\ref{fig:neighbors} we show an illustration for the nearest neighbors that each site sees.
\begin{figure}[h!]
    \centering
    \includegraphics[width=1\linewidth]{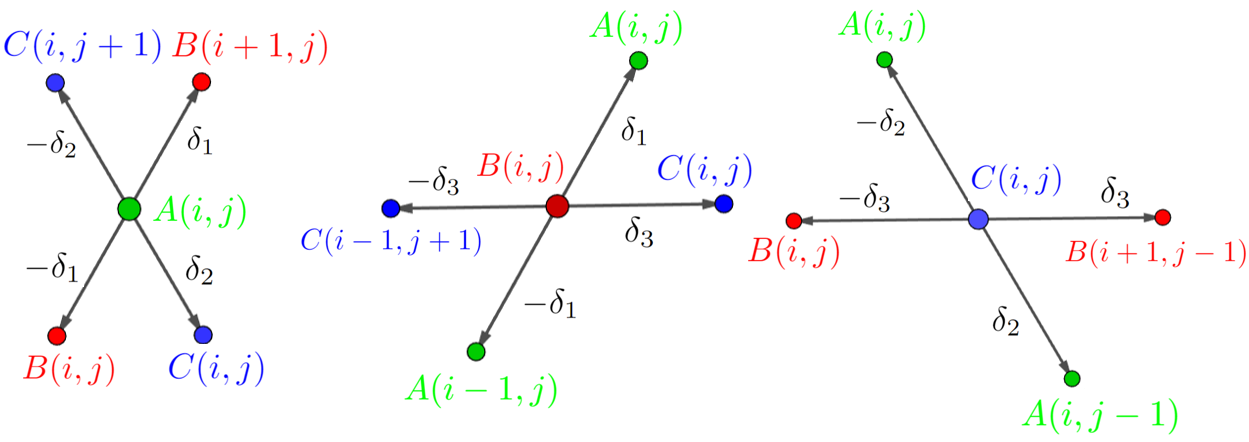}
    \caption{Nearest neighbors for each of the $A$, $B$ and $C$ sites. }
    \label{fig:neighbors}
\end{figure}

Therefore, in the simple case where the tunnel coefficient is the same for every pair of neighbors, and for the sake of simplicity $t=-1$, the tight-binding Hamiltonian is given by: $$\begin{aligned} H= & \left(\left|R_{i j}^{B}\right\rangle+\left|R_{i j}^{C}\right\rangle+\left|R_{i, j+1}^{C}\right\rangle+\left|R_{i+1, j}^{B}\right\rangle\right)\left\langle R_{i j}^{A}\right| \\ & +\left(\left|R_{i j}^{A}\right\rangle+\left|R_{i-1, j}^{A}\right\rangle+\left|R_{i-1, j-1}^{C}\right\rangle+\left|R_{i j}^{C}\right\rangle\right)\left\langle R_{i j}^{B}\right| \\ & +\left(\left|R_{i j}^{A}\right\rangle+\left|R_{i+1, j-1}^{B}\right\rangle+\left|R_{i j}^{B}\right\rangle+\left|R_{i, j-1}^{A}\right\rangle\right)\left\langle R_{i j}^{C} \right|\end{aligned}$$
\\ \vspace{0.1cm}
In the so-called base II~\cite{bena_remarks_2009}, the ``Bloch functions" are given by the Fourier transform over all cells of the orbitals at $\alpha=A,B$ or $C$: $$|k, \alpha\rangle=\frac{1}{\sqrt{N}} \sum\limits_{(i,j)}e^{i k \cdot R_{i j}^\alpha}|R_{i, j}^{\alpha}\rangle$$ (Whereas in base I, it would have been a slightly different phase:
$$|k, \alpha\rangle=\frac{1}{\sqrt{N}} \sum\limits_{(i,j)}e^{i k \cdot R_{i j}}|R_{i, j}^{\alpha}\rangle$$ 
where $R_{ij}$ is a fixed position inside the primitive cell (for example, the center).)

Actually, we should carefully call $|k,\alpha\rangle$ a Bloch function, because it is not strictly speaking an eigenstate of the Hamiltonian, although it is indeed an eigenstate of the translation operators of the crystal.

Indeed, let's examine the effect of the tight-binding Hamiltonian on, for example, $|k,A\rangle$:$$H|k, A\rangle=\frac{1}{\sqrt{N}} \sum_{(i, j)} e^{i k \cdot R_{i j}^A} H\left|R_{i j}^A\right\rangle$$
In order to find $H\left|R_{i j}^A\right\rangle$, we can just look at Fig.$~\ref{fig:neighbors}$:

$$
\begin{aligned}
H|k, A\rangle &=\frac{1}{\sqrt{N}} \sum_{(i, j)} e^{i k \cdot R_{i j}^A} 
\left(\left|R_{i j}^B\right\rangle+\left|R_{i, j+1}^C\right\rangle \right. \\
&\qquad\qquad\qquad \left. +\left|R_{i, j+1}^C\right\rangle+\left|R_{i+1, j}^B\right\rangle\right)
\end{aligned}
$$

Let's analyze one of those terms, say, the last one. Again with Fig.~\ref{fig:neighbors}, with: $$\delta_1=R_{i+1, j}^B-R_{i, j}^A$$

We find that $$\begin{aligned}
\frac{1}{\sqrt{N}} \sum_{(i, j)} e^{i k \cdot R_{i j}^A}\left|R_{i+1, j}^B\right\rangle &= \frac{1}{\sqrt{N}} e^{-i k \cdot \delta_1} \sum_{(i, j)} e^{i k \cdot R_{i+1, j}^B}\left|R_{i+1, j}^B\right\rangle
    \\&=e^{-i k \cdot \delta_1}|k, B\rangle
\end{aligned}
$$
where we used that the summation on all $i+1$ is the same as the summation on all $i$.

Reasoning in the same fashion on each term: $$
\begin{aligned}
    H|k, A\rangle&=\left(e^{-i k \cdot \delta_1}+e^{i k \cdot \delta_1}\right)|k, B\rangle+\left(e^{-i k \cdot \delta_2}+e^{i k \cdot \delta_2}\right)|k, C\rangle \\
    &= 2 \cos \left(k \cdot \delta_1\right)|k, B\rangle+2 \cos \left(k \cdot \delta_2\right)|k, C\rangle
\end{aligned}
$$

This shows that, although the $|k, A\rangle$ are not eigenstates of $H$, we have that, for fixed $k$, the Hamiltonian $H$ is stable on the subspace spanned by $|k, A\rangle,|k, B\rangle,|k, C\rangle$. Writing the restriction of $H$ in this basis gets us the to the Bloch Hamiltonian: $$H(k)=2 \left(\begin{array}{ccc}0 & \cos \left(k \cdot \delta_1\right) & \cos \left(k \cdot \delta_2\right) \\ \cos \left(k \cdot \delta_1\right) & 0 & \cos \left(k \cdot \delta_3\right) \\ \cos \left(k \cdot \delta_2\right) & \cos \left(k \cdot \delta_3\right) & 0\end{array}\right)$$
Adopting the convention that $t>0$, and factoring the minus sign out of the matrix, we get the Bloch Hamiltonian introduced in the main text.
Note that the choice of base II is necessary for the Bloch Hamiltonian $H(k)$ to be real~\cite{lim_dirac_2020}. This brings the downside that the eigenvectors do not have the periodicity of the Brillouin Zone, although the energies still do.

\section{Vector bundles, the Euler connection, and the Chern-Gauss-Bonnet theorem} \label{annexeEuler}

We define some essential notions for vector bundles. An introduction to this topic can be found in~\cite{tu_differential_2017}, and its link with solid state physics is explained in~\cite{sergeevVectorBundle2023}.
Roughly speaking, a vector bundle is a mathematical object which locally looks like a cartesian product, but not necessarily globally. More formally, a vector bundle consists in the data of a manifold $B$ called the base manifold, a manifold $E$ called the total space, and a finite-dimensional vector space $F$ called the fiber. They are related by a smooth, surjective projection map $\pi : E \rightarrow B$, satisfying the following condition of \textit{local linear trivialization}:
each point $b \in B$ has an open neighborhood $U$ and a diffeomorphism $\phi:\pi^{-1}(U)\longrightarrow U\times F$ such that:
\begin{itemize}
    \item The map $\phi$ is compatible with the projection, meaning that if $\phi(e)=(b',f)$, then $\pi(e)=b'$ for any $e \in \pi^{-1}(U)$. This can be written concisely as $\mathrm{proj}_1 \circ \phi = \pi$.
    \item For each point $b' \in U$, the restriction of $\phi$ to the fiber over $b'$, denoted $\phi_{b'} : \pi^{-1}(\{b'\}) \to \{b'\} \times F$, is an isomorphism of vector spaces.
\end{itemize}
This data is often summarized by the sequence $F \rightarrow E \xrightarrow{\pi} B$.

A simple example of vector bundle is the (infinite) cylinder $S^1 \times  \mathbb{R}$. It is a global cartesian product of the circle (the base space) and $\mathbb{R}$ (the fiber), so it is trivial.
An example of a non-trivial vector bundle is the Möbius vector bundle~\cite{sergeevVectorBundle2023}. Although it is also a bundle with base space the circle, and with fiber $\mathbb{R}$, it cannot be written globally as a cartesian product, because it is not orientable.

Another classical example of a vector bundle is the tangent bundle of a manifold: it consists of all the tangent vectors at all points of the manifold.

For our purposes, we investigate eigenspaces bundles. An eigenspace bundle is a vector bundle with base space the Brillouin zone, and the fiber at point $k$ is the direct sum of any number of eigenspaces of the Bloch Hamiltonian $H(k)$ (equation (~\ref{Hamiltonien})). It is real, because $H(k)$ is real symmetric.
For example, if we choose to attach all the eigenstates, we get a rank-3 vector bundle. This particular eigenspace bundle is the cartesian product $BZ\times \mathbb{R}^3$ and is therefore trivial.  It is nonetheless interesting, as it will allow us to define sub-bundles which are non trivial.
 \begin{figure}[h!]
    \centering
    \includegraphics[width=1\linewidth]{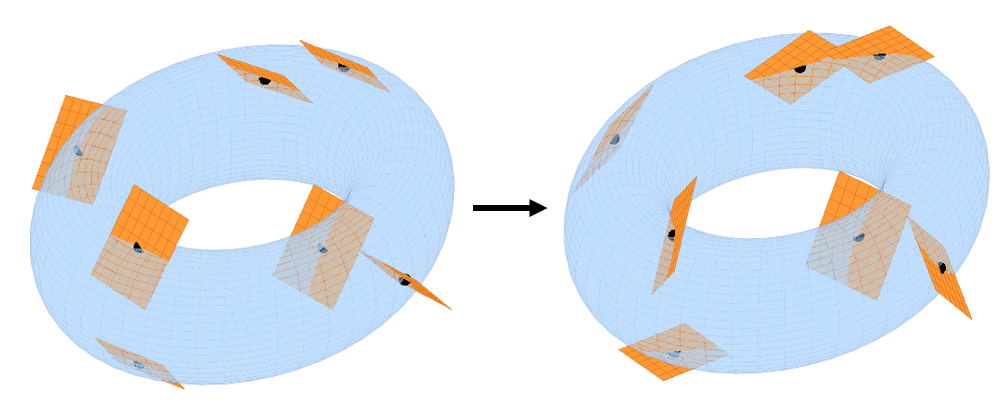}
    \caption{The rank-2 eigenspace bundle undergoes deformations as the external parameters of $H(k)$ are modified. Here, the parameter $a$ goes from $0$ to $1$. }
    \label{fig:deformbundle}
\end{figure}
The main vector bundle of interest in this article is the one obtained by considering only the two eigenspaces corresponding to the two principal bands, hence a rank-2 bundle such as in Fig.~\ref{fig:deformbundle}. Note that this vector bundle is only well-defined when the third band is isolated from the two principal bands. 

To give a more technical construction of this bundle, note that for a Hamiltonian $H(k)$, there is a continuous map $f:T^2 \rightarrow \operatorname{Gr}(2,3)$ from the Brillouin zone to the Grassmannian of subspaces of dimension 2 of $\mathbb{R}^3$, defined as $f(k)$ being the direct sum of the eigenspaces of consideration. There is a natural vector bundle on the Grassmannian, called tautological bundle, whose fiber at each point is the subspace corresponding to this point~\cite{milnor_characteristic_nodate}. The eigenspace bundle is then defined as the pullback bundle of the tautological bundle by $f$~\cite{sergeevVectorBundle2023}.

One might wonder if the modification of the external parameters of the Hamiltonian could induce a topological change in the eigenspace bundle. It turns out that this is not the case, as long as the third band stays isolated.
To see why, we use the following trick~\cite{husemoller1994fibre}: instead of considering, for each parameter $t$, the eigenspace bundle $E_t$ associated with $H_t (k)$ with base space the Brillouin zone $B=BZ$, we can directly consider the eigenspace bundle $E$ with base space $B \times I$ where the parameter $t$ evolves in $I$. The restrictions $$E_{\mid B  \times\{0\}}, E_{B  \times\{1\}}$$ are precisely $E_0$ and $E_1$. With those notations, we have the following~\cite{husemoller1994fibre}:
\vspace{0.4cm}

\begin{theorem}
    If $E$ is a vector bundle with base $B\times I$, with $B$ a paracompact, Hausdorff space, then we have the vector bundle isomorphism: $$E_{\mid B  \times\{0\}}\cong E_{\mid B  \times\{1\}}$$
    
\end{theorem}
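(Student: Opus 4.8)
The plan is to prove the standard homotopy invariance of vector bundles, namely that a bundle over $B\times I$ restricts to isomorphic bundles over $B\times\{0\}$ and $B\times\{1\}$. The key tool is a general lemma: if $E\to B\times I$ is a vector bundle and $E$ is trivial over both $B\times[0,1/2]$ and $B\times[1/2,1]$, then $E$ is trivial over all of $B\times I$; more usefully, the restrictions of $E$ to $B\times\{t\}$ are all mutually isomorphic. Because $B\times I$ with $B$ paracompact Hausdorff is again paracompact Hausdorff, one can work with a locally finite cover. First I would establish the local statement: around each point $(b,t)$ there is a neighborhood $U\times J$ (with $J$ an interval) on which $E$ is trivial, by local triviality of the bundle, possibly shrinking to a product neighborhood. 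Then, for each fixed $b$, compactness of $I$ lets me find a partition $0=t_0<t_1<\cdots<t_m=1$ and neighborhoods $U_i\ni b$ such that $E$ is trivial over $U_i\times[t_{i-1},t_i]$.

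The heart of the argument is a gluing/clutching step. On $U\times[t_{i-1},t_i]$, trivializations give an isomorphism; on an overlap $U\times\{t_i\}$ one compares the two trivializations via a transition function valued in $GL(n)$, and one extends it to $U\times[t_i,t_{i+1}]$ by making it constant in the $I$-direction, thereby producing a trivialization of $E$ over $U\times[t_{i-1},t_{i+1}]$ compatible with the original. Iterating over all subintervals yields a trivialization of $E$ over $U\times I$ for that particular $b$-neighborhood; call it $U_b\times I$. In particular $E_{\mid U_b\times\{0\}}\cong U_b\times F\cong E_{\mid U_b\times\{1\}}$, and more importantly the trivialization provides, over $U_b$, a \emph{canonical} isomorphism $E_{\mid U_b\times\{0\}}\to E_{\mid U_b\times\{1\}}$ — namely, ``slide along $I$''. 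The next step is to patch these local isomorphisms into a global one. Here paracompactness enters decisively: take a partition of unity $\{\rho_\alpha\}$ subordinate to the cover $\{U_\alpha\}$ and use it to interpolate between the local sliding isomorphisms. Concretely, I would realize the sliding isomorphism over each $U_\alpha$ as a homotopy $h_\alpha:[0,1]\times(E_{\mid U_\alpha\times\{0\}})\to E$ of bundle maps, then define a global bundle map over $B\times\{0\}$ into $E_{\mid B\times\{1\}}$ by composing fiberwise with $\sum_\alpha \rho_\alpha$ worth of sliding — one orders the $\alpha$'s (locally finitely many are nonzero) and composes the partial slides. Since each partial slide is a fiberwise isomorphism and composition of isomorphisms is an isomorphism, the result is a global bundle isomorphism $E_{\mid B\times\{0\}}\cong E_{\mid B\times\{1\}}$.

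I expect the main obstacle to be the patching of the local sliding isomorphisms into a coherent global one: unlike the existence of local trivializations (which is immediate) or the extension along a single interval (which is a one-dimensional clutching argument), the global statement requires genuinely using paracompactness to blend overlapping local data, and one must be careful that the blended map is still a fiberwise isomorphism rather than merely a bundle morphism — the standard device is to note that the set of isomorphisms between two fibers is open in all morphisms and that composing ``partial slides'' weighted by a partition of unity stays within this open set because each partial slide is itself an isomorphism. A clean alternative, which I would mention, is to invoke the general fact that for any bundle $E\to B\times I$ the projection $p:B\times I\to B\times\{0\}$ and the inclusion are homotopy inverses, and pullback along homotopic maps gives isomorphic bundles (this is exactly the classical theorem in Husemoller~\cite{husemoller1994fibre}); but since the excerpt is attributing this theorem to that reference, the self-contained route above via local product trivializations, one-dimensional clutching, and a partition-of-unity patching is the appropriate level of detail. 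A remark worth adding is that differentiability plays no role — the statement and proof are purely topological — and that the hypothesis $B$ paracompact Hausdorff is used only in the final patching step, which is why it cannot be dropped in general.
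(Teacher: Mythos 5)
Your argument is essentially the standard Husemoller/Hatcher proof of homotopy invariance of bundles, and it is sound; the difference from the paper is that the paper does not prove this statement at all --- it is quoted as a black-box theorem from Husemoller and then applied with $B=BZ\cong T^2$, which is compact, so in the paper's actual use case a finite cover suffices and all paracompactness subtleties disappear. Your self-contained route (local triviality over product neighborhoods, the one-dimensional clutching lemma to get triviality over $U\times I$, then a partition of unity to globalize) is exactly the classical argument, so it buys a proof where the paper offers only a citation. One caution on the globalization step: the phrases ``interpolate between the local sliding isomorphisms'' and ``blend overlapping local data'' suggest taking some partition-of-unity--weighted average of isomorphisms, which by itself is not a valid operation (a convex combination of fiberwise isomorphisms need not be an isomorphism, and the appeal to openness of the set of isomorphisms does not rescue an uncontrolled average). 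The correct mechanism --- which you do state in the same breath --- is to compose genuine partial slides: set $\psi_i=\rho_1+\cdots+\rho_i$, slide the bundle over the graph of $\psi_i$ to the graph of $\psi_{i-1}$ using the trivialization over $U_i\times I$, extend by the identity outside $\operatorname{supp}\rho_i$ (this is exactly what ``$\rho_i$ worth of sliding'' guarantees, since the slide amount vanishes on the boundary of $U_i$), and compose; local finiteness (or reduction to a countable subordinate family) makes the possibly infinite composition well defined and continuous, and each factor is an honest isomorphism, so the composite is one. If you phrase the patching this way and drop the ``weighted blending'' language, the proof is complete; your closing remarks (purely topological statement, paracompactness needed only in the patching, and the alternative derivation from homotopy invariance of pullbacks) are accurate.
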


Here the Brillouin zone $B=BZ$ is a torus, which is compact (and it particular paracompact) and Hausdorff, so this general theorem applies in our case. It shows that the eigenspace bundle remains topologically the same when the external parameters are modified.

\vspace{1cm}
Subsequently, we want to study the rank-2 eigenspace bundle by applying the Chern-Gauss-Bonnet theorem mentioned in Section~\ref{Eulernumber}. Let's verify the hypothesis needed.

A vector bundle $E\rightarrow B$ is said to be Riemannian if each of its fiber is equipped with an inner product, also called Riemannian metric. For the 2-band bundle, the inner product is taken to be the standard one on $\mathbb{R}^3$.

A connection $\nabla$ on a vector bundle is a tool which allows to generalize the notion of derivative for sections, and define the parallel transport. It is defined as the data, for each vector field $X$ of the base manifold, of an operator $$\nabla_X : \Gamma(E) \rightarrow \Gamma(E)$$ taking sections as input and output, such that $\nabla_X(s)$ is linear in $s$, $C^\infty(M)$-linear in $X$, and verifies the Leibniz rule: $$\nabla_X(f s)=f \nabla_X(s)+X(f) s$$
This definition is directly inspired by the properties of the directional derivative in Euclidean space. 

However, it is cumbersome for practical cases. Luckily, it is possible to describe its behaviour in a frame.
We choose a local frame, which amounts to choosing two eigenstates $u_2(k),u_3(k)$ locally ; the connection matrix $\omega$ in this local frame is the matrix with coefficients being differential forms of degree 1, such that: $$\begin{aligned} & \nabla_X( u_2)=\omega_{22}(X) u_2+\omega_{23}(X) u_3 \\ & \nabla_{X} (u_3)=\omega_{32}(X) u_2+\omega_{33}(X) u_3\end{aligned}$$
Subsequently, by linearity and the Leibniz rule, it is possible to compute the connection of any section in this frame.

In the case of an orthonormal frame, which is possible for us because the Hamiltonian is real symmetric (hence its eigenspaces are orthogonal), when the connection is compatible with the Riemannian metric, the connection matrix is skew-symmetric. 

In our situation, we can create a natural connection by specifying the following skew-symmetric connection matrix:$$\omega = \left(\begin{array}{cc}0 & \omega_{23} \\ -\omega_{23} & 0\end{array}\right)$$where $\omega_{23}=A_k =\langle u_2(k) \mid d_k u_3\rangle$. It is common to refer to the differential 1-form $A_k$ as \textit{the} Euler connection, and this convention was adopted in the main text, but one should keep in mind that it is only \textit{one} coefficient of the connection matrix in one specific local frame.

Following this, as with the connection matrix, we define the curvature matrix $\Omega$ relative to the frame. It is possible to prove the so-called Cartan's second structure equation, which gives a link between the connection and curvature matrices: $$\Omega=d\omega + \omega\wedge\omega$$ where the exterior derivative $d$ acts on each coefficient, and the wedge product $\wedge$ is a matrix product of differential forms.

Chern noticed that to change the frame from $(u_2(k),u_3(k))$ to $(u'_2(k),u'_3(k))$ with a gauge transformation $Q:U \rightarrow GL_n(\mathbb{R})$, the transformation formula for the curvature matrix is the usual conjugation: $\Omega' = Q\Omega Q^{-1}$. Later, he had the idea to apply a polynomial $P$ which is invariant under conjugation, such that $P(\Omega)=P(\Omega')$. For our case, the gauge transformations are elements of $O(2)$, and it can be demonstrated that the Pfaffian is an $O(2)$-invariant polynomial. 

The idea is that a skew-symmetric matrix has its determinant which is a perfect square ; the Pfaffian of a skew-symmetric matrix is defined as the square root of the determinant, which is a polynomial in the coefficients of the matrix.

In the end, we obtain a differential 2-form $\operatorname{Pf}(\Omega)$ which is globally defined on the base manifold:$$\operatorname{Pf}(\Omega)=\operatorname{Pf}\left(\begin{array}{ll}0 & dA \\ -dA & 0\end{array}\right)=\sqrt{\det\left(\begin{array}{ll}0 & dA \\ -dA & 0\end{array}\right)}=dA=F$$It is possible to prove that it is a closed form, meaning that its exterior derivative is 0. We can therefore take its cohomology class $[\operatorname{Pf(\Omega)}]\in H^2(T^2;\mathbb{R})$. Normalizing it, we obtain the Euler class $\frac{1}{2\pi}\operatorname{Pf(\Omega)}$, and the content of the Chern-Gauss-Bonnet theorem is that the integral of the Euler class on the base manifold gives the Euler number.

\section{Time-reversal symmetry in vector bundles}\label{annexeTimereversal}
In this appendix, we demonstrate that the kagome lattice exhibits time-reversal symmetry~\cite{Ludwig2016}, and we examine how this ensures that the rank-1 eigenspace bundles defined in Section~\ref{annihilation} are trivial. Note that this specific symmetry condition is not strictly necessary, as the system considered in previous work~\cite{bouhon_non-abelian_2020} possessed $C_{2}\mathcal{T}$-symmetry instead.

In spinless systems, the time-reversal operator $T$ of a Hilbert space $H$ is defined as an operator $T:H \rightarrow H$ which preserves the position operator and reverses the momentum operator~\cite{Zirnbauer2021}: $T^{-1}\hat{x}T=\hat{x},\quad T^{-1}\hat{p}T=-\hat{p}$. $T$ is antiunitary, and can be generally written as $T=UK$, with $U$ a unitary operator and $K$ the complex conjugation operator.
A system is time-reversal symmetric if the Hamiltonian $H$ commutes with the time-reversal operator $T$: $HT=TH$.
In the case of a tight-binding model, this symmetry implies that the corresponding Bloch Hamiltonian $H(k)$ satisfies the condition~\cite{Ludwig2016}: $$H(k)^*=H(-k)$$where the star denotes complex conjugation. In this article, we defined the Bloch Hamiltonian of the kagome lattice in equation ~\eqref{Hamiltonien}: $$H(k)=\begin{pmatrix} 
E_A & t_{AB}\cos(k \cdot \delta_1) & t_{AC}\cos(k \cdot \delta_2) \\ 
t_{AB}\cos(k \cdot \delta_1) & E_B & t_{BC}\cos(k \cdot \delta_3) \\ 
t_{AC}\cos(k \cdot \delta_2) & t_{BC}\cos(k \cdot \delta_3) & E_C
\end{pmatrix}$$
Because it is a real matrix, the condition becomes $H(k)=H(-k)$, which is verified with any value of the parameters, because cosine is an even function. 

We now prove that a rank-1 real eigenspace bundle with time-reversal symmetry is trivial. 
As established in Appendix~\ref{annexeEuler}, a rank-1 eigenspace bundle is the pullback of the tautological bundle $\gamma^1$ by a continuous map $f:T^2 \rightarrow \mathbb{R}P^2$, where $\mathbb{R}P^2= \operatorname{Gr}(1,3)$ is the projective space, and the Brillouin zone $BZ$ has been identified with the torus $T^2 \cong S^1\times S^1$. 
A rank-1 vector bundle is trivial if and only if it is orientable~\cite{husemoller1994fibre}. The orientability of real vector bundles can be detected using the first Stiefel-Whitney class $w_1$~\cite{milnor_characteristic_nodate}. So, we want to compute $w_1(E)$.
We have, by the naturality property of characteristic classes ~\cite{milnor_characteristic_nodate}: $$w_1(E)=w_1(f^* \gamma^1)=f^*(w_{1}(\gamma^1))$$where the second $f^*$ is the induced map on cohomology groups: $$f^* : H^1\left(\mathbb{R} P^2, \mathbb{Z} / 2 \mathbb{Z}\right)\rightarrow H^1\left(T^2, \mathbb{Z} / 2 \mathbb{Z}\right)$$
It is well-known~\cite{HatcherAlgTop} that those groups are: $$H^1(\mathbb{R} P^2, \mathbb{Z} / 2 \mathbb{Z})\cong \mathbb{Z} / 2 \mathbb{Z},\:\: H^1(T^2, \mathbb{Z} / 2 \mathbb{Z}) \cong (\mathbb{Z} / 2 \mathbb{Z})^2$$

Because the tautological bundle is not orientable~\cite{husemoller1994fibre}, its first Stiefel-Whitney class $w_1\left(\gamma^1\right)$ is non-zero in $\mathbb{Z} / 2 \mathbb{Z}$. A priori, $f^*\left(w_{1}\left(\gamma^1\right)\right)$ could be any of the four elements in $(\mathbb{Z} / 2 \mathbb{Z})^2$.

However, the time-reversal symmetry condition $H(k)=H(-k)$ implies that the eigenspace at momentum $k$ is identical to the one at $-k$. Consequently, the map $f:T^2 \longrightarrow \mathbb{R}P^2$ satisfies $f(k)=f(-k)$. This property allows $f$ to be factored through the quotient space $T^2/\sim$, where the equivalence relation $\sim$ on the torus is defined as follows:  $$k_1 \sim k_2\Leftrightarrow k_1 = k_2 \;\text{or}\; k_1=-k_2$$ By the universal property of quotient spaces~\cite{LeeTM}, because $f:T^2 \longrightarrow \mathbb{R}P^2$ is compatible with this equivalence relation, there exists a map $\tilde{f}:T^2\!/\!\sim \:\longrightarrow \mathbb{R}P^2$ such that the following diagram commutes: 
% https://tikzcd.yichuanshen.de/#N4Igdg9gJgpgziAXAbVABwnAlgFyxMJZABgBoBGAXVJADcBDAGwFcYkQAVAPQCYQBfUuky58hFOQrU6TVuwA68gLb0cACwBGG4ACV+ABV4ChIDNjwEiZYtIYs2iTrwAEAemeLsSgdJhQA5vBEoABmAE4Q3ohkIDgQSJIy9uwhIDSM9BowjPoiFuIgYVj+ajjGoRFRMXFIPIIVkbU0NYiJGVk5eWLsjDAhZTR2co6KAO5YsHiMsMAh-D78QA
\begin{center}
    \begin{tikzcd}
T^2\!/\!\sim \arrow[rd, "\widetilde{f}"] &               \\
T^2 \arrow[r, "f"'] \arrow[u]          & \mathbb{R}P^2
\end{tikzcd}
\end{center}
We now prove that that the quotient space $T^2 / \sim$ is homeomorphic to the sphere $S^2$.

The torus is homeomorphic to a quotient of $\mathbb{R}^2$ by $\mathbb{Z}^2$. A fundamental domain is given by the square $[0,1]\times [0,1]$, as shown on panel $(a)$ of Fig.~\ref{fig:cylindresphère}. The additional identification $(x,y)\sim (-x,-y)$ imposed by the time-reversal symmetry translates, in the square, as the central symmetry $(x,y)\sim(1-x,1-y)$. This can be seen in the following way: $(x,y)$ is identified with the point $(-x,-y)$ which is outside of the fundamental domain ; by adding $(1,1)$, we get the point $(1-x,1-y)$ in the fundamental domain. 

In turn, this is equivalent to considering just half the square $[0,1]\times [0,\frac{1}{2}]$ (panel $(b)$ of Fig.~\ref{fig:cylindresphère}) with the identifications $$(0,y)\sim (1,y),  (x,0)\sim (1-x,0)\text{ and }(x,\frac{1}{2})\sim (1-x,\frac{1}{2})$$
The first identification rolls the square into a cylinder. The second and third identifications collapse the top and bottom circular boundaries of the cylinder, yielding a space homeomorphic to a sphere~\cite{HatcherAlgTop}, as illustrated on Fig.~\ref{fig:cylindresphère}.

\begin{figure}[h!]
    \centering
    \includegraphics[width=1\linewidth]{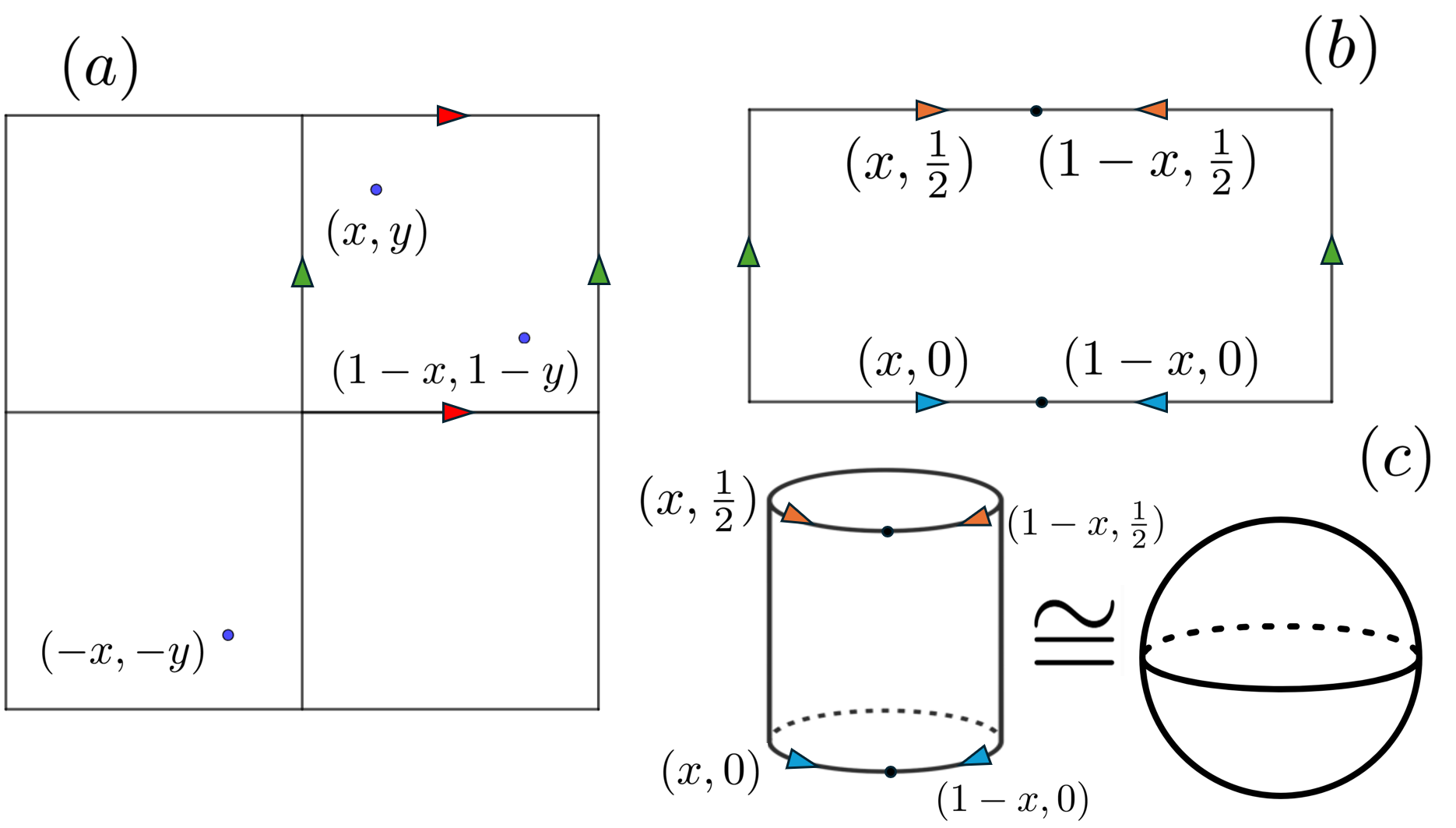}
    \caption{Visualisation of the quotient space $T^2/\sim \cong S^2$. Identifications of edges are denoted with triangles of the same color. $(a)$ Time-reversal symmetry on the fundamental domain of the torus (top-right square) is a central symmetry. $(b)$ The quotient space is generated by only an half-square.
    $(c)$ Cylinder obtained by gluing the segments marked with green triangles in the half-square. When its circular boundaries are collapsed (orange and blue triangles), we obtain the sphere $S^2$.
    }
    \label{fig:cylindresphère}
\end{figure}

It is well-known that the first cohomology group of the sphere is trivial~\cite{HatcherAlgTop}: $$H^1(S^2 ,\mathbb{Z}/2\mathbb{Z})\cong 0$$
This factorization of maps induces a corresponding diagram in cohomology. We get that $f^*$ factors by the zero map, which shows it is itself zero, as shown in the induced commutative diagram (with arrows reversed, given that cohomology is a contravariant functor~\cite{LeeSM}): 
\begin{center}
% https://tikzcd.yichuanshen.de/#N4Igdg9gJgpgziAXAbVABwnAlgFyxMJZABgBpiBdUkANwEMAbAVxiRGJAF9T1Nd9CKMgEYqtRizYAKADoyAtnRwALAEargALU4B6AExzFK9Vs4BKAHp6zAAi48QGbHgJFhpUdXrNWiEIaU1DW19AONgzi4xGCgAc3giUAAzACcIeSQ9ahwIJDJxHzY5PAZYYCTOCwAqEGoGOlUYBgAFPhdBEBSsWOUce2S0jMQskBykdwLJPyTq-pBU9Lzs3MQJ7yn2WpB6xpa2gTYunr7OCk4gA
\begin{tikzcd}
0 \arrow[d, "0"']            &                                                                    \\
(\mathbb{Z}/2\mathbb{Z})^2  & \mathbb{Z}/2\mathbb{Z} \arrow[lu, "\tilde{f}^*"'] \arrow[l, "f^*"]
\end{tikzcd}
\end{center}
Finally, we have that the first Stiefel-Whitney class of our bundle is $$w_1(E)=f^*(w_{1}(\gamma^1))=0$$ which proves that the bundle is orientable, and thus trivial because it is of rank 1, which concludes the proof.

\section{Rotation around a Dirac point}\label{secA2}\label{annexeRotation}

We show that rotating around a Dirac point (say, associated with bands 2 and 3) results in an added phase factor $-1$ for the two corresponding eigenstates, and no added phase for the remaining eigenstate~\cite{wu_non-abelian_2019}:$$(u_1(k),u_2(k),u_3(k))\longrightarrow (u_1(k),-u_2(k),-u_3(k))$$
Without loss of generality, we can study a small region containing only a principal Dirac point at $k=0$. The starting point is the Löwdin method. A pedagogical presentation can be found in ~\cite{lim_dirac_2020}.
We have that the time-independant Schrödinger equation $H\psi = E\psi$ is equivalent to the two following equations: $$\left\{\begin{array}{c}
\left(H_{\alpha \alpha}+H_{\alpha \beta}\left(E I-H_{\beta \beta}\right)^{-1} H_{\beta \alpha}\right) \psi_\alpha=E \psi_\alpha \\
\psi_\beta=\left(E I-H_{\beta \beta}\right)^{-1} H_{\beta \alpha} \psi_\alpha
\end{array}\right.$$ where we decomposed $H$ along the direct sum $A\oplus B$ where $A$ is the direct sum of the eigenspaces associated with the bands 2 and 3, and $B$ the eigenspace associated with the band 1. Note that the first equation is in terms of a so-called effective Hamiltonian $H_{\alpha \alpha}+H_{\alpha \beta}\left(E I-H_{\beta \beta}\right)^{-1} H_{\beta \alpha}$

This effective $2\times 2$ Hamiltonian is real, and so it can be written in terms of the Pauli matrices, at first order, near 0:$$H(k)=h_0 I_2+h_x(k)\sigma_x+h_z(k)\sigma_z$$ where $h_0$ is a constant and $h_x$, $h_z$ are linear. The eigenstates of this Hamiltonian are~\cite{sergeevVectorBundle2023}: $$\binom{\cos (\theta / 2)}{ \sin (\theta / 2)},\binom{- \sin (\theta / 2)}{\cos (\theta / 2)}$$
\vspace{0.1cm}\\ But when we rotate around the Dirac point, the angle of $k$ acquires an additional $2\pi$  and, by linearity, $h(k)$ also acquires an angle of $2\pi$, which corresponds to changing $\theta$ to $\theta+2\pi$, and we deduce that the eigenstates transform as: $$\binom{\cos (\theta/2+\pi)}{ \sin (\theta / 2+\pi)}=-\binom{\cos (\theta/2)}{ \sin (\theta / 2)},\binom{\sin (\theta / 2+\pi)}{\cos (\theta / 2+\pi)}=-\binom{\sin (\theta / 2)}{\cos (\theta / 2)}$$which in turn shows that the two corresponding eigenstates of the form $\psi = (\psi_\alpha,\psi_\beta)$ of the full Hamiltonian acquire a phase factor $- 1$.

By reversing the roles of $A$ and $B$ in this argument, we can obtain a $1\times 1$ effective Hamiltonian associated with $B$, whose eigenstate is constant with respect to  $k$. This shows that the eigenvector associated with the first band of the full Hamiltonian does not acquire any new phase after the rotation around the Dirac point.

\section{Quaternions and the frame space}\label{annexeQuaternion}

Quaternions are a generalization of complex numbers, forming a 4-dimensional division algebra $\mathbb{H}$ over the real numbers~\cite{quaternions}: a quaternion $q$ can be written as $q=a+b\mathbf{i}+c\mathbf{j}+d\mathbf{k}$ where $a,b,c,d$ are real numbers, and $\mathbf{i},\mathbf{j},\mathbf{k}$ are imaginary units satisfying the fundamental relations $\mathbf{i}^2 = \mathbf{j}^2 = \mathbf{k}^2 = \mathbf{ijk} = -1$. These relations lead to the non-commutative multiplication rules $\mathbf{i}\mathbf{j}=-\mathbf{j}\mathbf{i}=\mathbf{k}$, $\mathbf{jk}=-\mathbf{kj}=\mathbf{i}$ and $\mathbf{ki}=-\mathbf{ik} = \mathbf{j}$. In physics, quaternions provide an elegant way to represent rotations in three-dimensional space, offering advantages over rotation matrices by being more concise and avoiding certain phenomena like gimbal lock~\cite{quaternions}.
The quaternion group $Q_8=\{\pm1,\pm \mathbf{i},\pm \mathbf{j},\pm \mathbf{k}\}$ consists of the eight unit quaternions that form a non-abelian group under multiplication. 

We now prove that the fundamental group of the non-abelian frame space defined in Section~\ref{sec:quaternions} is given by the quaternion group $Q_8$. We review the argument given in~\cite{wu_non-abelian_2019}. For the purposes of the proof, it is preferable to use the formalism of $SU(2)$ rather than that of $O(3)$. The relevant facts about the quotient topology can be found in~\cite{LeeTM}.

The frame space is the topological quotient $O(3)/A$, where $A \cong (\mathbb{Z}/2\mathbb{Z})^3$ is the group of $3\times 3$ diagonal matrices with coefficients $\pm 1$, acting on $O(3)$ by left multiplication. To compute its fundamental group, we must first ensure it is a well-behaved topological space, namely Hausdorff and path-connected. Since $A$ is a finite group (and therefore compact), acting on the Hausdorff space $O(3)$, the resulting quotient space is Hausdorff~\cite{LeeTM}. 
Although $O(3)$ is not path-connected, its quotient $O(3)/A$ is. Indeed, for any matrix $O \in O(3)$ with $\det(O)=-1$, we can choose a matrix $D \in A$ with $\det(D)=-1$ (for example, $D=\text{diag}(-1,1,1)$). Then the matrix $O' = D \cdot O$ has $\det(O')=1$ and belongs to the same orbit as $O$. Therefore, every orbit in $O(3)/A$ contains a representative from the path-connected component $SO(3)$. Since the quotient map restricted to $SO(3)$ is continuous and surjective onto $O(3)/A$, the quotient space is path-connected.

Let's reformulate the quotient in terms of $SO(3)$. We define $B$ the subgroup of matrices of $A$ with an even number of -1 coefficients, in other words, it is the subgroup generated by the three $\pi$-rotations.
We can show the homeomorphism $O(3)/A\cong SO(3)/B$. Indeed, we restrict the quotient map to $SO(3)$: $$\tilde{\pi}: S O(3) \longrightarrow O(3) /A$$Now, acting on $SO(3)$ with $B$, we notice that this action is nullified by the quotient by $A$. This shows that the map can be taken to the quotient:$$\phi : SO(3) / B \longrightarrow O(3) / A$$and it is straightforward to prove that this map is a homeomorphism. 

The advantage of this formulation for the quotient is that we are now able to use the powerful double cover~\cite{hallRep}: $$\operatorname{Ad} : SU(2) \rightarrow SO(3) $$ Because it is a double cover, the reciprocal image $Q$ of $B$ by this map is a group containing 8 matrices. 
In order to identify what group it is exactly, it is useful to identify $SU(2)$ with the unitary quaternions
 via the mapping $$\begin{aligned} \operatorname{SU}(2)  \longrightarrow & \quad\mathbb{H} \\ \left(\begin{array}{cc}a+b i & c+d i \\ -c+d i & a-b i\end{array}\right)  \mapsto & \:a+b \mathbf{i}+c \mathbf{j}+d \mathbf{k}\end{aligned}$$
 
 In this view, the covering map associates each unitary quaternion $q=\exp(\theta u)$ with the rotation of angle $2\theta$ and axis $u$, where the imaginary quaternion $u$ is identified with a vector of $\mathbb{R}^3$.
In this light, the reciprocal image of the $\pi$-rotation along the $x$-axis is given by $\{+\mathbf{i},-\mathbf{i}\}$ because $\mathbf{i} = \exp(\mathbf{i}\pi/2)$. The reciprocal image of the $\pi$-rotation along the $y$-axis is $\{+\mathbf{j},-\mathbf{j}\}$ because $\mathbf{j}=\exp(\mathbf{j}\pi/2)$, and in the same way, the reciprocal image of the $\pi$-rotation along the $z$-axis is $\{+\mathbf{k},-\mathbf{k}\}$. Finally, the reciprocal image of the identity is $\{+1,-1\}$.
In total, we obtain that $Q=Q_8=\{\pm1,\pm \mathbf{i}, \pm \mathbf{j}, \pm \mathbf{k}\}$ is indeed the quaternion group.

By a similar reasoning as earlier, we find that $$S O(3) / B\cong SU(2) / Q_8 $$But now, we may use the following standard theorem~\cite{LeeTM}:\vspace{0.4cm}
\begin{theorem}
Let $X$ be a simply connected space, and $G$ a group acting on $X$ in a free and properly discontinuous way. The fundamental group of the quotient space $X/G$ is given by: $$\pi_{1} (X / G) \cong G$$    
\end{theorem}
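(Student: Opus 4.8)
The plan is to show that, under the stated hypotheses, the quotient map $p : X \to X/G$ is a covering map --- in fact \emph{the} universal cover of $X/G$ --- and then to identify its group of deck transformations with $G$ on the one hand and with $\pi_{1}(X/G)$ on the other. Composing the two resulting isomorphisms gives the claim.

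First I would establish that $p$ is a covering map, which is where the ``free and properly discontinuous'' hypothesis enters: by definition of such an action, every point $x \in X$ admits an open neighborhood $U$ whose $G$-translates $gU$, $g \in G$, are pairwise disjoint. Since the action is free, the orbit $Gx$ is in bijection with $G$, and $p^{-1}(p(U)) = \bigsqcup_{g \in G} gU$ is a disjoint union of open sets, each mapped homeomorphically by $p$ onto the open set $p(U) \subseteq X/G$. Hence every point of $X/G$ has an evenly covered neighborhood, so $p$ is a covering map. Because $X$ is simply connected (and path-connected, and hence so is $X/G$, being a continuous image of $X$), $p$ is the universal covering of $X/G$.

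Next I would invoke the standard fact that for a universal covering $p : X \to Y$ the group $\mathrm{Deck}(p)$ of deck transformations is isomorphic to $\pi_{1}(Y)$, and it then remains to show $\mathrm{Deck}(p) \cong G$. Every $g \in G$ acts on $X$ by a homeomorphism with $p \circ g = p$, hence defines a deck transformation; this gives a homomorphism $G \to \mathrm{Deck}(p)$, injective because the action is free. For surjectivity, fix a basepoint $x_{0}$ and let $\varphi$ be any deck transformation; then $\varphi(x_{0})$ lies in the fiber $p^{-1}(p(x_{0})) = G x_{0}$, so $\varphi(x_{0}) = g x_{0}$ for a unique $g \in G$. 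Since $\varphi$ and the action of $g$ are both lifts of $p$ agreeing at $x_{0}$ and $X$ is connected, uniqueness of lifts forces $\varphi = g$. Thus $\mathrm{Deck}(p) \cong G$, and $\pi_{1}(X/G) \cong G$ as desired.

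I expect the main obstacle to be the verification that $p$ is genuinely a covering map, i.e.\ extracting evenly covered neighborhoods from the precise meaning of ``properly discontinuous,'' together with keeping track of the mild regularity (path-connectedness, local path-connectedness) implicitly required by the universal-cover and deck-transformation machinery; in the situation at hand these are immediate since $X = SU(2) \cong S^{3}$. An alternative, more self-contained route avoiding deck theory is to define directly the map $\pi_{1}(X/G) \to G$ sending the class of a loop $\gamma$ based at $[x_{0}]$ to the unique $g \in G$ with $\tilde{\gamma}(1) = g x_{0}$, where $\tilde{\gamma}$ is the lift of $\gamma$ starting at $x_{0}$: simple connectedness of $X$ makes this well-defined on homotopy classes and injective, path-connectedness of $X$ makes it surjective, and checking that it is a group homomorphism is routine.
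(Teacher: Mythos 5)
Your proof is correct: it is the standard covering-space argument (the quotient map $p:X\to X/G$ is a covering because the action is free and properly discontinuous, simple connectedness makes it the universal cover, and then $\pi_1(X/G)\cong\mathrm{Deck}(p)\cong G$), with the regularity caveats and the alternative monodromy construction both handled appropriately. The paper does not prove this statement itself but cites it as a standard theorem, and your argument is exactly the one underlying the cited result, so the two approaches coincide.
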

\vspace{0.2cm}

In our case, we have that $SU(2)\cong S^3$ is simply connected, and the action of $Q_8$ on it is discrete, so it is properly discontinuous, as well as being free. Therefore:  $$\pi_1(S U(2) / Q_8) \cong Q_8$$which concludes the proof.

 It is possible to prove $~\cite{bouhon_non-abelian_2020}$ that this formulation is precisely equivalent to the Euler number formulation.

\section{Types of quaternionic charges}
\label{annexeDiracBelt}

In Section~\ref{sec:quaternions}, we claimed that the quaternion charge of a loop characterized the type of Dirac point encircled by the loop~\cite{wu_non-abelian_2019}. We write those rules again here:
\begin{itemize}
\item 1 corresponds to a loop with no Dirac points inside, or two Dirac points which can annihilate inside the loop.
\item $\pm \mathbf{i}$ corresponds to a principal Dirac point.
\item $\pm \mathbf{k}$ corresponds to an adjacent Dirac point.
\item $\pm \mathbf{j}$ corresponds to a pair of each type of Dirac points.
\item $-1$ corresponds to a pair of Dirac points of the same type and which cannot annihilate inside the loop.
\end{itemize}

To prove this correspondence, let's first note that it is crucial to distinguish between two different kinds of loops: a loop in the Brillouin zone, $$\gamma:S^1 \rightarrow T^2$$
and the induced loop in the quotiented space of frames $$\gamma : S^1 \rightarrow O(3)/(\mathbb{Z}/2\mathbb{Z})^3$$
Although a loop on the Brillouin zone might appear as homotopically non-trivial (if it encircles some Dirac points), the corresponding loop in the (quotiented) frame space may be trivial.

To start the proof, let's start with the most fundamental case of a loop of the Brillouin zone which circles around a single principal Dirac point. As we have seen in Appendix~\ref{annexeRotation}, a frame $(u_1(k),u_2(k),u_3(k))$ is transformed in $(u_1(k),-u_2(k),-u_3(k))$ after a rotation around this Dirac point. If we fix a basis $(u_1(k),u_2(k),u_3(k))$, this corresponds to a path in $O(3)$ between the matrices $$\left(\begin{array}{lll}1 & 0 & 0 \\ 0 & 1 & 0 \\ 0 & 0 & 1\end{array}\right)\:\text{and}\:\left(\begin{array}{ccc}1 & 0 & 0 \\ 0 & -1 & 0 \\ 0 & 0 & -1\end{array}\right)$$
This transformation is a rotation of angle $\pi$ around the unit vector $u_1(k)$, with coordinates $(1,0,0)$ in this basis.
A rotation of angle $\theta$ around an axis represented by the unit vector $n=(n_x,n_y,n_z)$ corresponds to the two following unitary quaternions~\cite{quaternions}: $$q= \pm\left(\:\cos (\theta / 2)+\sin (\theta / 2)\left(n_x \mathbf{i}+n_y \mathbf{j}+n_z \mathbf{k}\right)\:\right)$$Applying this formula in our case, we find that the corresponding quaternions are: $$\pm q= \pm\left(\cos \left(\frac{\pi}{2}\right)+\sin \frac{\pi}{2} \mathbf{i}\right)= \pm \mathbf{i}$$

In the same way, it is straightforward to show that a $\pm \mathbf{k}$ charge corresponds to an adjacent Dirac point.
Then, a loop encircling a pair of each type of Dirac points can be written as a composition of one loop encircling a principal point, and another encircling an adjacent point. By the product rule for quaternions, the corresponding charge is $\pm \mathbf{ik} = \pm \mathbf{j}$.

If a loop encircles no Dirac point, or two Dirac points which can be annihilated inside the loop, then it is contractible, therefore trivial, and its quaternionic charge is $1$. So we already can say that if the charge is different from 1, it means the Dirac points cannot be annihilated inside the loop.

In general for a loop encircling two principal Dirac points, again with the product rule, we get that the quaternionic charge is $\pm 1$. 

Finally, the remaining case to prove is that, when the two principal Dirac points cannot be annihilated, then the charge is $-1$. This can be seen by observing that when the charge is 1, then the loop is contractible, therefore the Dirac points can be annihilated.

\end{appendices}

\bibliography{NewEulerKagome}

\end{document}